\newcommand{\eq}[1]{\begin{align}#1\end{align}}
\newcommand{\eqb}[1]{\begin{align}\begin{aligned}#1\end{aligned}\end{align}}
\newtheorem{theorem}{Theorem}[section]
\newtheorem*{theorem*}{Theorem}
\newtheorem{corollary}[theorem]{Corollary}
\newtheorem{lemma}[theorem]{Lemma}
\newtheorem{remark}[theorem]{Remark}
\DeclareMathOperator{\trace}{Tr} 
\DeclareMathOperator{\mob}{\textnormal{M\"ob}}
\begin{document}

\date{\today}

\title[On the minimum output entropy of random orthogonal quantum channels]{On the minimum output entropy of \\ random orthogonal quantum channels}

\author{Motohisa Fukuda}
\address{MF: Yamagata University, 1-4-12 Kojirakawa, Yamagata, 990-8560 Japan}
\email{fukuda@sci.kj.yamagata-u.ac.jp}

\author{Ion Nechita}
\address{IN: Zentrum Mathematik, M5, Technische Universit\"at M\"unchen, Boltzmannstrasse 3, 85748 Garching, Germany
and CNRS, Laboratoire de Physique Th\'{e}orique, IRSAMC, Universit\'{e} de Toulouse, UPS, F-31062 Toulouse, France}
\email{nechita@irsamc.ups-tlse.fr}

\subjclass[2000]{}
\keywords{}

\begin{abstract}
We consider sequences of random quantum channels defined using the Stinespring formula  with Haar-distributed random orthogonal matrices. 
For any fixed sequence of input states, we study the asymptotic eigenvalue distribution of the outputs through tensor powers of random channels. 
We show that the input states achieving minimum output entropy are tensor products of maximally entangled states (Bell states) when the tensor power is even. 
This phenomenon is completely different from the one for random quantum channels constructed from Haar-distributed random unitary matrices,
which leads us to formulate some conjectures about the regularized minimum output entropy.
\end{abstract}

\maketitle

\tableofcontents

\section{Introduction}\label{seq:intro}

One of most important questions in quantum information theory is to determine the optimal rate of transmission of classical information through noisy quantum channels. Unlike its classical counterpart, no closed formula has been found yet for the classical capacity of quantum channels. 
Since the capacity is defined as the maximum rate at which classical information can be sent reliably over the channel in a way that the probability of error approaches zero as the length of codes goes infinity, naturally the capacity $C(\cdot)$ of a quantum channel $\Phi$ has an asymptotic formula \cite{holevo1998capacity,schumacher1997sending}
\eq{\label{eq:capacity}
C(\Phi) =\lim_{r \to \infty} \frac 1r \chi(\Phi^{\otimes r})
}
where $\chi(\cdot)$ is the Holevo capacity. 
Here, we assume that the errors appearing in the transmission of information are independent along the uses of the quantum channels $\Phi$, and  it is represented by the tensor power in the formula. 

For some classes of channels, such as depolarizing channels \cite{king2003capacity}, entanglement breaking channels \cite{shor2002additivity, King2003e}, Hadamard channels \cite{king2005properties}, and unital qubit channels \cite{king2002additivity}, the above formula \eqref{eq:capacity} can be simplified. 
This is a consequence of the following additivity property proved in the above cited papers: for any $r \in \mathbb N$
\eq{\label{eq:chi-additivity}
\chi(\Phi^{\otimes r}) = r \chi(\Phi).
}
Additivity for the Holevo capacity yields a closed formula (called a \emph{single-letter formula}) for the classical capacity for such channels: $C(\Phi) = \chi(\Phi)$.

However, the above simplification does not hold for all quantum channels. 
In a breakthrough paper \cite{hastings2009superadditivity}, Hastings showed violation of additivity for another quantity, the \emph{minimum output entropy}, 
which implies that \eqref{eq:chi-additivity} does not hold for some quantum channels.
These two concepts of minimum output entropy and Holevo capacity are originally different;
the former only cares about single output states, while the latter deals with ensembles of outputs (see Section \ref{sec:qit} for the exact definitions).
However, previous to Hastings' work, Shor showed \cite{shor2004equivalence} that additivity properties for those two quantities are globally equivalent to each other, allowing the translation of counter-examples from one setting to the other. 

In this paper, we focus on the minimum output entropy $S_{\min}(\Phi^{\otimes r})$,
which has close conceptual connection to $\chi (\Phi^{\otimes r})$. 
We inquire what kind of inputs states will minimize the output entropy for randomly chosen quantum channels. We explain briefly our methodology in three main points. 

First, we choose to focus on \emph{random quantum channels}. The interest in the study of random quantum channels comes mainly from the fact that, to date,  violation of additivity is proved 
only through random techniques (typically with random unitary quantum channels generated by random unitary matrices), see \cite{hastings2009superadditivity,fukuda2010comments,fukuda2010entanglement,aubrun2011hastingss,belinschi2012eigenvectors,fukuda2014revisiting,belinschi2016almost,collins2016haagerup}.
Non-random counter-examples have been obtained only for $p$-R\'enyi minimum output entropies, see \cite{werner2002counterexample,grudka2010constructive}.

Second,  our main results concern \emph{random orthogonal quantum channels}. As is explained in Section \ref{sec:qit}, any quantum channel can be dilated to a unitary closed evolution on a larger space. In this work, we only consider the case where closed dynamics comes from an orthogonal rotation. The reason for this choice is that it allows us to consider identical copies of a random quantum channel, whereas if one uses the more general unitary evolutions, then one needs to take pairs of a channel and its complex conjugate to witness additivity violations:
\eq{
S_{\min}(\Phi \otimes \bar \Phi) < S_{\min}(\Phi)+S_{\min}( \bar \Phi)
}
where the complex conjugation are applied to the unitary matrix which defines the channel $\Phi$.
To translate this result into a violation inequality for two copies of the same channel
\eq{
 S_{\min}(\Phi^{\otimes 2}) < 2S_{\min}(\Phi) 
 }
one needs to restrict themselves to the real case, where the complex conjugate does not make any difference 
(unless one employs a particular symmetrization operation, see \cite{fukuda2007simplifying}). 

Third, we shall {fix a sequence of input states}, and study the asymptotic behavior of the output states. 
In order to obtain the exact value of the minimum output entropy, one has to optimize over all input states for a fixed realization of the random quantum channel, 
but our current techniques do not allow this setting. This is indeed a drawback of our method, but in this setting we can obtain quite precise results on the possible outputs in the asymptotic limit. The current setting, where a \emph{universal, channel-independent} encoding is considered, is related to the coding theory for \emph{compound quantum channels}, see e.g.~\cite{datta2007coding,bjelakovic2009classical,mosonyi2015coding}.

Our main results (Theorem \ref{thm:optimal-input-seq} and Corollary \ref{cor:min-entropy}) can be informally stated as follows. 

\begin{theorem*}
Consider random quantum channels $\Phi_n$ obtained by partial-tracing the action of Haar-distributed random orthogonal matrices,  where $n$ is the system dimension.
Then, among fixed sequences of input states, the ones achieving minimum output entropy (asymptotically, as $n \to \infty$) for the channels $\Phi_n^{\otimes 2r}$ 
are tensor products of $r$ maximally entangled states (Bell states).
\end{theorem*}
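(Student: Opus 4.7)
The plan is to invoke the general asymptotic spectral formula for $\Phi_n^{\otimes 2r}(\rho_n)$ that the paper develops via graphical Weingarten calculus for the orthogonal group, and then identify the minimising input sequence by a combinatorial optimisation. Concretely, the $p$-th moment $\mathbb{E}\,\trace[\Phi_n^{\otimes 2r}(\rho_n)^p]$ expands as a sum over pair partitions (Brauer diagrams) on $4pr$ points; each diagram contributes a power of $n$ determined by its loop count, times a tensor contraction of $p$ copies of $\rho_n$ wired according to the diagram. This is the orthogonal analogue of the permutation sum familiar from Haar unitary channels, and it provides the combinatorial backbone of the proof.

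The key special feature I would exploit is the orthogonal invariance of the maximally entangled state: $(O \otimes O)|\Omega_n\rangle = |\Omega_n\rangle$ for every $O \in \mathrm{O}(n)$. I would therefore take $\rho_n = (|\Omega_n\rangle\langle\Omega_n|)^{\otimes r}$, interpreting the $2r$ channel uses as grouped into $r$ Bell pairs, and show that this invariance forces every leading-order term in the Weingarten expansion to factor through a single pair partition $\pi_\star$, namely the one matching Bell partners. All other contributions are suppressed by extra factors of $1/n$, so the limiting output spectrum collapses to a single atom at $1$ and the asymptotic output entropy vanishes.

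To establish optimality, I would recast the limiting spectrum as a mixture of ``partition-spectra'' indexed by pair partitions, with mixing weights determined by the input. The Bell-product input is the one that concentrates the mixing weight on the single partition $\pi_\star$, making the mixture degenerate; any other fixed sequence necessarily spreads weight across at least two nondegenerate pieces and therefore, by strict concavity of $-x \log x$, yields a strictly larger asymptotic entropy.

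The hardest step is this last optimisation. Unlike the unitary case, where the index set is the symmetric group and the triangle inequality on $S_p$ cleanly selects the dominant permutations, the orthogonal case uses the Brauer monoid of pair partitions, whose ``geodesic'' structure is coarser. One has to prove a combinatorial inequality certifying that $\pi_\star$ is the unique maximiser of the loop count against a Bell-product input, and simultaneously rule out the possibility that a non-Bell input sequence reproduces the same limiting weights through accidental cancellations between diagrams. Controlling these cancellations and establishing uniqueness of Bell products among all fixed input sequences is the main technical obstacle I would expect to encounter.
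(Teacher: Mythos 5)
Your overall architecture --- orthogonal Weingarten expansion over pairings, a mixture of ``partition--spectra'' weighted by the input, and concavity of entropy to locate the minimiser at an extremal weight distribution --- is aligned with what the paper does. But the central quantitative claim in your second step is wrong, and it breaks the proof. You assert that for the Bell-product input the leading-order terms collapse onto a single pairing $\pi_\star$, so that ``the limiting output spectrum collapses to a single atom at $1$ and the asymptotic output entropy vanishes.'' This cannot be right: the input maximally entangled state lives on $\mathbb C^{d_n}\otimes\mathbb C^{d_n}$ with $d_n=\lfloor tkn\rfloor < kn$, so the exact invariance $(O\otimes O)\Omega=\Omega$ you invoke holds only for the \emph{untruncated} orthogonal matrix on $\mathbb C^{kn}\otimes\mathbb C^{kn}$; the truncation to an isometry and the partial trace destroy it. What actually survives is that the output of a Bell pair converges to the \emph{isotropic state} $\eta = t\,k^{-1}\omega+(1-t)k^{-2}I$, with one eigenvalue $t+(1-t)k^{-2}$ and $k^2-1$ eigenvalues $(1-t)k^{-2}$, hence entropy $H(\eta)>0$; for $2r$ copies the limiting output is $\eta^{\otimes r}$ with entropy $rH(\eta)$, not $0$. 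A zero asymptotic minimum output entropy would trivialise the whole additivity discussion the paper is aimed at.

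The optimality step is also not actually carried out, and the ingredients you flag as ``the main technical obstacle'' are exactly what the paper supplies and what your sketch lacks. First, to control the input-dependent factor one needs the bound $|f_\beta(\rho_n)|\le d_n^{\flat(\beta)}$ in terms of the number of \emph{bumps} of $\beta$, together with the identity $2\flat(\beta)=\min_{\tau\ \mathrm{transverse}}|\tau\beta|$; this is what selects the dominant pairings uniformly in the input, with no case analysis of ``accidental cancellations.'' Second, the raw expansion writes the expected output as $\sum_B \langle \tilde T_B,\rho_n\rangle \tilde R_B$ where the $\tilde R_B$ are \emph{not} positive semidefinite, so your ``mixture'' is not yet a convex combination; the paper applies M\"obius inversion on the lattice of partial pairings to rewrite it as $\sum_A\langle \tilde Q_A,\rho_n\rangle \tilde S_A$ with genuine states $\tilde S_A$ (tensor products of isotropic states and maximally mixed factors) and asymptotically nonnegative weights summing to one. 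Only then does concavity of entropy over the resulting convex body $K_{r,k,t}$ give the minimum at the extreme point indexed by a maximal partial pairing, which the Bell-product input realises. Finally, you would still need a concentration statement (the paper proves $\|Z(\rho_n)-\mathbb E Z(\rho_n)\|_2\to 0$ in probability via a second-moment computation) to pass from expected outputs to the actual random outputs; your sketch never addresses this.
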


The paper is organized as follows. In Sections \ref{sec:qit} and \ref{sec:combinatorics} we recall, respectively, some basics notions and facts from quantum information theory and from the combinatorial theory of permutations and pairings. In Section \ref{sec:Wg-O} we present the theory of invariant integration over the orthogonal group, using the graphical tensor notation. We discuss then in Section \ref{sec:outputs} the model of random quantum channels we are studying. Sections \ref{sec:outputs} and \ref{sec:optimal-inputs} are the technical core of the paper, in which we characterize the asymptotical output states for an arbitrary fixed sequence of inputs, and then we optimize over input sequences. Finally, we discuss our results and a few conjectures in the closing Section \ref{sec:discussion}.

\medskip

\noindent {\it Acknowledgement.} We would like to thank the referees for their very helpful comments which helped improve the quality of the presentation. I.N.'s research has been supported by a von Humboldt fellowship, the ANR project {StoQ} {ANR-14-CE25-0003-01}.
M.F. was financially supported by JSPS KAKENHI Grant Number JP16K00005.
I.N and M.F. are both supported by the PHC Sakura program (project number: 38615VA), implemented by the French Ministry of Foreign Affairs, the French Ministry of Higher Education and Research and the Japan Society for Promotion of Science. Both authors acknowledge the hospitality of the TU M\"unchen, where this research was conducted.

\section{Basics from quantum information theory}\label{sec:qit}

We review in this section some basic definitions and facts from quantum information theory. Some excellent references on the subject are \cite{nielsen2010quantum} and \cite{wilde2017quantum}. 

A quantum state is a positive semidefinite matrix with unit trace; we denote the set of quantum states by
$$\mathcal M_d^{1,+}(\mathbb C) := \{ \rho \in \mathcal M_d(\mathbb C)\, : \, \rho \geq 0 \text{ and } \operatorname{Tr} \rho = 1\}.$$
Rank one projections $\rho = xx^*$ (here, $x\in \mathbb C^d$, $\|x\|=1$) are the extremal points of the convex body of quantum states. In the case of bipartite composite systems, the state space is the tensor product $[\mathcal M_{d_1}(\mathbb C) \otimes \mathcal M_{d_2}(\mathbb C)]^{1,+}$. Of particular importance is the \emph{maximally entangled state} $\hat\omega = d^{-1} \Omega\Omega^* \in \mathcal M_{d^2}^{1,+}(\mathbb C)$, 
which is also called \emph{Bell state}. Here, 
$$\mathbb C^d \otimes \mathbb C^d \ni \Omega :=  \sum_{i=1}^d e_i \otimes e_i$$
is a vector of norm $\sqrt d$ (hence the normalization factor $d^{-1}$  in the formula for $\hat\omega$). We denote by $\omega = \Omega \Omega^*$ the un-normalized version of $\hat \omega$. One can extend, using functional calculus, the notion of (Shannon) entropy to quantum states:
$$S(\rho)  = - \operatorname{Tr} \rho \log \rho,$$
a quantity which is called the \emph{von Neumann entropy} of the quantum state $\rho$. 

\emph{Quantum channels} are the most general transformations of quantum states allowed by the laws of quantum mechanics. Mathematically, quantum channels are completely positive, trace preserving maps between two matrix algebras (remember that we are concerned here only with finite-dimensional quantum systems). By the celebrated Stinespring dilation theorem \cite{stinespring1955positive}, all quantum channels $\Phi:\mathcal M_d(\mathbb C) \to \mathcal M_k(\mathbb C)$ can be obtained as 
$$\Phi(X) = [\operatorname{id} \otimes \operatorname{Tr}](VXV^*),$$
where $V :\mathbb C^d \to \mathbb C^k \otimes \mathbb C^n$ is an isometry, and $n$ is a parameter (called the \emph{ancilla dimension}) which can be taken to be $n = dk$. 

As explained in the introduction, quantum Shannon theory is concerned with information transmission tasks in the quantum world. One of the fundamental information processing protocols is the transmission of classical information through a noisy quantum channel. The \emph{classical capacity} of a quantum channel $\Phi$ is defined as the optimal rate (\# bits transmitted) / (\# uses of channel), assuming that the probability of successfully decoding the transmitted information approaches one. The mathematical theory was developed in \cite{holevo1998capacity} and \cite{schumacher1997sending}, see also \cite[Section 20]{wilde2017quantum} for a textbook presentation. The definition of the classical capacity of a given quantum channel $\Phi$ is
$$C(\Phi) = \lim_{r \to \infty} \frac 1 r \chi (\Phi^{\otimes r}),$$ 
where $\chi$ is the Holevo capacity of $\Phi$ given by
$$\chi(\Phi) = \max_{\{p_i, \rho_i\}} S(\Phi(\sum_i p_i \rho_i)) - \sum_i p_i S(\Phi(\rho_i)),$$
where the maximum is taken over all ensembles of probability weights $p_i$ and input quantum states $\rho_i$ (actually, ensembles of size $d^2$, where $d$ is the dimension of the input space of $\Phi$ are enough). 

The question whether the quantity $\chi$ is additive, i.e.
$$\forall \Phi, \Psi, \qquad \chi(\Phi \otimes \Psi) = \chi(\Phi) + \chi(\Psi)$$
is known as the \emph{additivity problem} \cite{king2001minimal}. Shor has shown in \cite{shor2004equivalence} that the additivity of $\chi$ is equivalent to the additivity of a much simpler quantity, the \emph{minimum output entropy}
$$S_{\min}(\Phi) = \min_{\rho \in \mathcal M_d^{1,+}(\mathbb C)} S(\Phi(\rho)).$$
Much of the work on the additivity problem was about the quantity $S_{\min}$, proving either that additivity holds for particular classes of channels, 
or providing counter-examples (see discussion and references in Section \ref{seq:intro}). 
The focus of the current paper is to understand, for a random orthogonal quantum channel $\Phi$, 
how additivity $S_{\min}(\Phi^{\otimes r}) = r S_{\min}(\Phi)$ is violated and to find input states achieving $S_{\min}(\Phi^{\otimes r})$.

\section{Combinatorial aspects of permutations and pairings}\label{sec:combinatorics}

As the reader shall see in the next section, the theory of invariant integration over the orthogonal group $\mathcal O(d)$ is intimately connected to the combinatorial theory of pairings and permutations. We gather in the current section the necessary definitions and basic facts from combinatorics, as well as some useful lemmas. 

We denote by $\mathcal S_r$ the symmetric group on $r$ elements. 
For a permutation $\alpha \in \mathcal S_r$, we denote by $\#\alpha$ the number of its cycles (including fixed points). The quantity $|\alpha| = r - \#\alpha$ is called the \emph{length} of $\alpha$, and it can be shown to be equal to the minimal number of transpositions that multiply to $\alpha$. Also, $|\alpha|$ is the distance between $\alpha$ and the identity permutation $\mathrm{id} \in \mathcal S_r$ inside the Cayley graph of $\mathcal S_r$ generated by all transpositions. 
Permutations $\alpha,\beta,\gamma \in S_r$ satisfy triangle inequality: $|\alpha  \beta^{-1}| \leq |\alpha  \gamma^{-1}| + |\gamma  \beta^{-1}|$,
and when the equality holds, we say that $\gamma$ is on a \emph{geodesic} connecting $\alpha$ and $\beta$, and express it as 
\eq{\label{eq:geodesic-notation}
\alpha - \gamma - \beta
} 
We write $\tilde{\mathcal S}_{2r}$ for the set of products of $r$ disjoint transpositions. The set $\tilde{\mathcal S}_{2r}$ is in bijection with the set of pairings of $[2r]:=\{1, 2, \ldots, 2r\}$. To any permutation $\alpha \in \mathcal S_r$, we associate an unoriented graph $G_\alpha$, which has vertex set $V = [r]$ and edge set $E = \{ \{i, \alpha(i)\} \, : \, i \in [r] \}$. It is obvious that each vertex has degree 2 (a loop at a vertex contributes degree 2 to that vertex) and that the cycles of $\alpha$ are in bijection with the connected components of $G_\alpha$. In particular, it holds that $G_\alpha$ has $\#\alpha$ connected components. We investigate next a similar setting, where the permutation is replaced by a set of pairings. 

To a pair $(\alpha,\beta)$ of pairings of the set $[2r]$, encoded by permutations $\alpha,\beta \in \tilde{\mathcal S}_{2r}$, we associate an unoriented graph $G_{\alpha,\beta}$ having vertex set $V=[2r]$, and edge set given by 
$$E = \{ \{i, \alpha(i)\} \, : \, i \in [2r] \} \cup \{ \{i, \beta(i)\} \, : \, i \in [2r] \},$$
with the convention that we allow multiple (in our case, at most 2) edges between two vertices. The following lemma is implicit in \cite[Lemma 3.5]{collins2006integration}
\begin{lemma}\label{lem:connected-components-pairings}
The number of connected components of the graph $G_{\alpha,\beta}$ is $\#(\alpha \beta)/2 = r - |\alpha \beta|/2$.
\end{lemma}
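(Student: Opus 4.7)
The plan is to exploit the fact that both $\alpha$ and $\beta$ are fixed-point-free involutions, so every vertex of $G_{\alpha,\beta}$ carries exactly one $\alpha$-edge and exactly one $\beta$-edge and therefore has degree $2$ (with a double edge contributing degree $2$ at each endpoint when $\alpha(i)=\beta(i)$). Consequently every connected component of $G_{\alpha,\beta}$ is a closed walk in which $\alpha$-edges and $\beta$-edges alternate, i.e.\ a $2k$-cycle for some $k\geq 1$ (the case $k=1$ being the double edge).

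Next, I would extract cycles of $\alpha\beta$ from such a $2k$-cycle. Label the vertices of one connected component cyclically as $v_1,v_2,\ldots,v_{2k}$ so that $\alpha(v_{2j-1})=v_{2j}$ and $\beta(v_{2j})=v_{2j+1}$ (indices mod $2k$). Then
\eq{
\alpha\beta(v_{2j})=\alpha(v_{2j+1})=v_{2j+2},
\qquad
\alpha\beta(v_{2j+1})=\alpha(v_{2j})=v_{2j-1},
}
so the even-indexed vertices form one cycle of $\alpha\beta$ of length $k$, and the odd-indexed vertices form a second cycle of $\alpha\beta$ of length $k$. In the degenerate case $k=1$ (a double edge between $i$ and $j$), both vertices are fixed points of $\alpha\beta$, still accounting for two cycles. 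Conversely, any cycle of $\alpha\beta$ is obtained this way, because its support together with its image under $\alpha$ forms a connected component of $G_{\alpha,\beta}$.

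This establishes a $2\!:\!1$ correspondence between cycles of $\alpha\beta$ and connected components of $G_{\alpha,\beta}$, giving the first equality $\#\text{comp}(G_{\alpha,\beta})=\#(\alpha\beta)/2$. The second equality is then immediate from the definition $|\sigma|=2r-\#\sigma$ applied to $\sigma=\alpha\beta\in\mathcal S_{2r}$.

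The only mildly subtle point is the bookkeeping for the double-edge case, which must be treated on the same footing as a genuine $2k$-cycle with $k\geq 2$; once one adopts the convention that the two edges incident to a vertex of a double edge are the one $\alpha$-edge and the one $\beta$-edge (even though they connect the same pair of vertices), the alternation argument goes through uniformly and no separate case analysis is needed.
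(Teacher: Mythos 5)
Your proof is correct and takes essentially the same route as the paper: both arguments identify each connected component of $G_{\alpha,\beta}$ as an alternating cycle whose even- and odd-position vertices form exactly two cycles of $\alpha\beta$, yielding the $2\!:\!1$ correspondence. Your explicit labeling makes the disjointness of the two $\alpha\beta$-orbits within a component (a point the paper treats somewhat tersely) transparent, and your appeal to the definition $|\sigma|=2r-\#\sigma$ for the second equality is a legitimate shortcut past the paper's detour through non-crossing permutations.
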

\begin{proof}
First, note that $\#\theta  =2 r - |\theta|$ for $\theta \in \mathcal S_{2r}$.
Indeed, choose $\gamma \in \mathcal S_{2r}$ so that $\gamma\tau\gamma^{-1}$ is non-crossing, but it implies that
\begin{align}
\#\theta= \#(\gamma \theta \gamma^{-1})  = 2r - |\gamma \theta\gamma^{-1}| = 2r-  |\theta|
\end{align}
based on the well-known fact on non-crossing permutations \cite{NicaSpeicher}. 

Next, we count the number of connected components of $G_{\alpha,\beta}$ for $\alpha,\beta \in \tilde S_{2r}$.
To do so, we analyze the connected component which includes $1$. 
Suppose a number, say, $m$ is connected to $1$ in the graph $G_{\alpha,\beta}$. 
Then, we have the following two exclusive cases.
\eqb{
1 \mapsto \beta(1) \mapsto \alpha\beta(1) \mapsto \beta\alpha\beta(1) \mapsto &\ldots \mapsto  m \\
1 \mapsto \alpha(1) \mapsto \beta\alpha(1) \mapsto \alpha\beta\alpha(1) \mapsto& \ldots\mapsto m,
}
i.e. we can reach $m$ by applying $\alpha$ and $\beta$ in turn because of the idempotent property: 
$\alpha^2  = \mathrm{id}= \beta^2$. 
Hence, we now have identified the connected component which includes $1$ as a disjoint union of two sets of vertices:
\eq{\label{eq:two-part}
 \{ (\alpha\beta)^l(1): l \in \mathbb Z\} \sqcup \{(\alpha\beta)^l \alpha (1): l \in \mathbb Z\}
}
Indeed, we have
\eqb{
\beta\alpha &= \beta^{-1} \alpha^{-1}= (\alpha\beta)^{-1} \\
\beta &= \beta \alpha \alpha = (\alpha\beta)^{-1} \alpha
}
Hence, a connected component in the graph $G_{\alpha,\beta}$ always consists of two loops 
generated by $\alpha\beta(i)$ and $\beta\alpha(i)$ for some $i \in [2r]$,
so that the number of connected components is $\frac{\#(\alpha\beta)}{2}$.
In fact,
\eq{
\alpha(1) = (\alpha\beta)^l \alpha(1)  = \alpha (\alpha \beta)^{-l}(1)  \qquad \Leftrightarrow \qquad (\alpha \beta)^l (1) =1
}
This completes the proof.
\end{proof}
To understand the proof more intuitively see Figure \ref{fig:connected-component}.
All numbers connected to $1$ are represented by black and white dots, where from left to right
$1\mapsto \beta(1) \mapsto \alpha\beta(1) \mapsto \ldots \mapsto (\alpha\beta)^l(1) =1$ for some $l$.
The left part of \eqref{eq:two-part} corresponds to the black dots and the right the white dots. 
Note that $\alpha(1) = \beta (\alpha\beta)^{l-1}(1) = (\beta \alpha)^{l-1} \beta(1)$
and those arrows represent applications of $\alpha\beta$.

\begin{figure}
\includegraphics{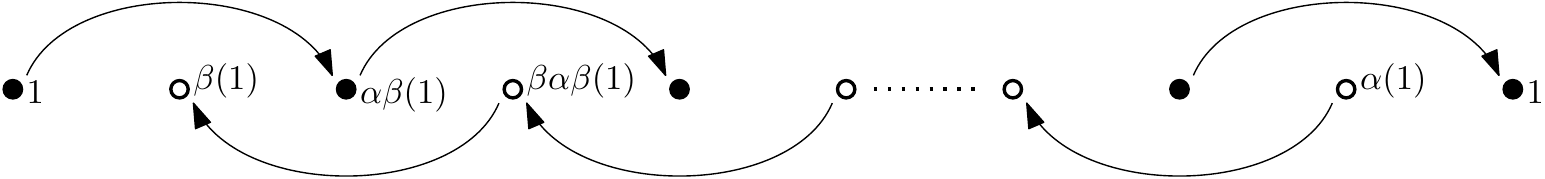}
\caption{The connected component with  $1$.
The black dots and their associated arrows show how $\alpha\beta$ forms a loop starting with $1$,
and the white ones with $\alpha(1)$.} 
\label{fig:connected-component}
\end{figure}
\section{Invariant integration over the orthogonal group}\label{sec:Wg-O}

Since the technical core of the paper consists of moment computation for random, Haar distributed orthogonal matrices, we review in this section the Weingarten formula for averaging over the orthogonal group. 

Following the work of Weingarten \cite{weingarten1978asymptotic}, the modern mathematical formulation was developed by Collins and \'Sniady in \cite{collins2006integration}; some further elements can be found in \cite{collins2009some,banica2010orthogonal}. The orthogonal Weingarten formula provides a combinatorial expression for the average of a monomial in the entries of a Haar orthogonal matrix. 

\begin{theorem}\cite[Corollary 3.4]{collins2006integration}
For every choice of indices $i_1, \ldots, i_{2r}$ and $j_1, \ldots, j_{2r}$, we have
\begin{equation}\label{eq:Wg}
\int_{\mathcal O(n)} U_{i_1j_1}\cdots U_{i_{2r}j_{2r}} dU = \sum_{\alpha, \beta \in \tilde{\mathcal S}_{2r}} \prod_{s=1}^{2r}\delta_{i_s,i_{\alpha(s)}}\delta_{j_s,j_{\beta(s)}} \operatorname{Wg}_n(\alpha,\beta).
\end{equation}
The odd moments vanish:
$$\int_{\mathcal O(n)} U_{i_1j_1}\cdots U_{i_{2r+1}j_{2r+1}} dU = 0.$$
\end{theorem}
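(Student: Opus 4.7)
The plan is to exploit the two-sided invariance of Haar measure on $\mathcal O(n)$, identify the integral as an invariant tensor, then use the Brauer characterization of orthogonal invariants together with Gram-matrix inversion.

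First, for the odd case, I would observe that $-I \in \mathcal O(n)$ and hence the map $U \mapsto -U$ preserves Haar measure. Applied to a monomial of odd length $2r+1$, this sends the integrand to its negative, which forces the integral to vanish.

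For even moments, I would view $T := \int_{\mathcal O(n)} U^{\otimes 2r} \otimes U^{\otimes 2r} \, dU$ as a tensor in $(\mathbb C^n)^{\otimes 4r}$ (with indices $(i_1,\dots,i_{2r}; j_1,\dots,j_{2r})$) and exploit left/right invariance: for any $O,O' \in \mathcal O(n)$, $T = (O^{\otimes 2r} \otimes {O'}^{\otimes 2r}) T$. Thus $T$ lies in the space of $\mathcal O(n)\times \mathcal O(n)$-invariant tensors on the ``$i$-side'' and ``$j$-side'' separately. The first fundamental theorem of invariant theory for the orthogonal group (Brauer) says that the space of $\mathcal O(n)$-invariants in $(\mathbb C^n)^{\otimes 2r}$ is spanned by the pairing vectors $e_\alpha := \sum_{i_1,\dots,i_{2r}} \prod_{s=1}^{2r} \delta_{i_s, i_{\alpha(s)}}\, e_{i_1}\otimes\cdots\otimes e_{i_{2r}}$ for $\alpha \in \tilde{\mathcal S}_{2r}$ (for $n \geq 2r$ these are linearly independent; for smaller $n$ one gets the same span with relations). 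Consequently there exist coefficients $W_n(\alpha,\beta)$ such that
\eq{
T = \sum_{\alpha,\beta \in \tilde{\mathcal S}_{2r}} W_n(\alpha,\beta)\, e_\alpha \otimes e_\beta,
}
which is exactly the stated formula with $\operatorname{Wg}_n := W_n$.

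To pin down the coefficients, I would compute $\langle e_\gamma \otimes e_\delta, T\rangle$ in two ways. On the one hand, contracting $T$ against the indicator tensors produces $\sum_{\alpha,\beta} G_n(\gamma,\alpha)\, G_n(\delta,\beta)\, W_n(\alpha,\beta)$, where $G_n(\alpha,\beta) = \langle e_\alpha, e_\beta\rangle$ is the pairing Gram matrix. On the other hand, the same contraction applied inside the integral produces $\int_{\mathcal O(n)} \operatorname{Tr}(\cdots)\,dU$ which, by orthogonality $UU^T = I$, evaluates directly to $G_n(\gamma,\delta)$: each pair of $U$'s linked by $\gamma$ on the $i$-side and by $\delta$ on the $j$-side contracts to a Kronecker delta, leaving a sum of $n^{(\text{\# loops})}$, and by Lemma \ref{lem:connected-components-pairings} the loop count is $\#(\gamma\delta)/2$, hence $G_n(\gamma,\delta) = n^{\#(\gamma\delta)/2}$. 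Equating the two expressions shows that $\operatorname{Wg}_n$ must be a pseudoinverse of $G_n$ on the space spanned by pairings, which is how the Weingarten function is defined.

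The main obstacle is the invariant-theoretic input: justifying that the $\mathcal O(n)$-invariants in $(\mathbb C^n)^{\otimes 2r}$ are exactly the span of pairing tensors, and handling the degenerate regime $n < 2r$ where $G_n$ fails to be invertible and one must pass to the Moore--Penrose pseudoinverse. Once this is granted, the rest of the argument is an application of invariance followed by linear algebra on a finite-dimensional combinatorial basis.
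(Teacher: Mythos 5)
This theorem is quoted in the paper from \cite{collins2006integration} without proof, and your argument is essentially the standard one from that reference: parity of the Haar measure under $U\mapsto -U$ for the odd moments, two-sided invariance plus Brauer's first fundamental theorem to place the moment tensor in the span of the pairing tensors $e_\alpha\otimes e_\beta$, and contraction against $e_\gamma\otimes e_\delta$ (using $U^\top U=I$ and Lemma \ref{lem:connected-components-pairings}) to reduce to the Gram relation $G\,\operatorname{Wg}_n G=G$ with $G(\gamma,\delta)=n^{\#(\gamma\delta)/2}$, which is exactly how $\operatorname{Wg}_n$ is defined as the (pseudo-)inverse of the loop-counting matrix. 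The only blemishes are cosmetic: your $T$ should be $\int_{\mathcal O(n)}U^{\otimes 2r}\,dU$ read as a tensor with $2r$ row and $2r$ column indices rather than $U^{\otimes 2r}\otimes U^{\otimes 2r}$ (which would involve $4r$ factors of $U$), and in the degenerate regime $n<2r$ it is worth observing that any solution of $G W G=G$ reproduces the integral, so the non-uniqueness of the coefficients is harmless.
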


The Weingarten function $\operatorname{Wg}$ is a combinatorial function, which can either be seen as the matrix inverse of the loop counting matrix in the Brauer algebra or as a sum over Young diagrams, see \cite{collins2006integration}. The values of this function for $r \leq 4$ can be found in \cite[Section 6]{collins2006integration}. In \cite[Theorem 3.13]{collins2006integration}, the authors also compute the leading order in the large $n$ asymptotic expansion of the orthogonal Weingarten function:
\begin{equation}\label{eq:Wg-asympt}
\operatorname{Wg}_n(\alpha,\beta) = (1+o(1))n^{-r-|\alpha\beta|/2}\operatorname{\textnormal{M\"ob}}(\alpha,\beta),
\end{equation}
where $\operatorname{\textnormal{M\"ob}}$ is the M\"obius function that we define next (see \cite[Section 3.3]{collins2006integration}). Let $2p_i$ be the number of cycles of the permutation $\alpha\beta$ having length $i$ (this number is indeed even, see Lemma \ref{lem:connected-components-pairings}). Then, define
\begin{equation}\label{eq:Mob}
\operatorname{\textnormal{M\"ob}}(\alpha,\beta):=\prod_i (-1)^{p_i-1}\operatorname{Cat}_{p_i-1},
\end{equation}
where $\operatorname{Cat}_p$ is the $p$-th Catalan number 
$$\operatorname{Cat}_p = \frac{1}{p+1} \binom{2p}{p}.$$

In \cite{collins2010random} and \cite{collins2011gaussianization}, the authors introduced a \emph{graphical calculus} for computing expectation values of expressions involving random unitary matrices and, respectively, random Gaussian matrices. We present next an natural extension of these ideas to integrals over the orthogonal group with respect to the Haar measure. We shall be brief in our exposition, since the procedure is very similar to the one in \cite{collins2010random}, also described at length in \cite[Section III.C]{collins2016random}. We shall encode tensors (i.e.~vectors, linear forms, matrices, bipartite matrices, etc.) by boxes having labels attached to them corresponding to the respective vector spaces. Empty labels are associated to duals of vector spaces (linear forms, or ``inputs'' of matrices), while filled labels correspond to primal spaces (that is vectors, or ``outputs'' of matrices). Wires connect an empty label with a filled one of the same shape, corresponding to the same vector space. In other words, wires encode tensor contractions $V^* \times V \to \mathbb C$. Presented with a diagram $\mathcal D$ (a collection of boxes and wires) containing boxes associated to a Haar distributed random orthogonal matrix $U \in \mathcal U(n)$, we can interpret the Weingarten formula \eqref{eq:Wg} as a \emph{graph expansion} corresponding to the sum over the pairings $\alpha$ and $\beta$. To each term in the sum we associate a new diagram $\mathcal D_{\alpha,\beta}$ which is obtained by deleting the boxed corresponding to the random matrix $U$, and adding wires encoding the product of delta functions in \eqref{eq:Wg}.  For each pair $(i,j)$ contained in $\alpha$, a wire is added between each primal vector space (i.e.~filled label) of the boxes corresponding to the $i$-th and the $j$-th matrix $U$. Similarly, wires are added between the empty labels, according to the permutation $\beta$. We have thus, assuming $\mathcal D$ contains $2r$ $U$-boxes, 
\begin{equation}\label{eq:Wg-graphical}
\mathbb E_U \mathcal D = \sum_{\alpha, \beta \in \tilde{\mathcal S}_{2r}} \mathcal D_{\alpha,\beta} \operatorname{Wg}_{n}(\alpha, \beta).
\end{equation}

Let us showcase the formula above using a simple example. Let $A \in \mathcal M_n(\mathbb C)$, and let us compute $\mathbb E_U UAU^\top$, for a Haar orthogonal matrix $U \in \mathcal O(n)$. Here, $r=1$, so there is only one possible pairing $\alpha = \beta = (12)$. The original diagram and the graph expansion are represented in Figure \ref{fig:example-Wg}. We conclude that
$$\mathbb E_U UAU^\top =  \operatorname{Tr}(A)I_n \operatorname{Wg}_n((12),(12)) = \frac{1}{n}\operatorname{Tr}(A)I_n.$$

\begin{figure}[htbp]
\includegraphics{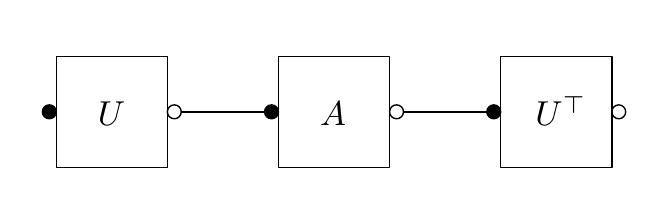} \qquad \includegraphics{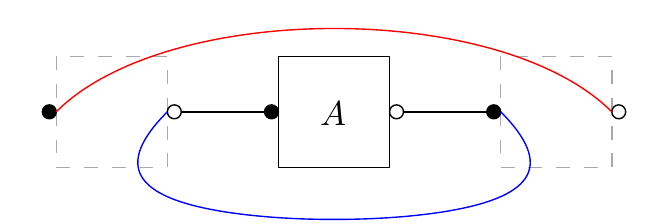}
\caption{On the left, the diagram for the matrix $UAU^\top$. On the right, the only diagram appearing in the graph expansion, obtained by deleting the $U$-boxes, and adding wires corresponding to the $\alpha$-pairing (in red) and to the $\beta$-pairing (in blue).} 
\label{fig:example-Wg}
\end{figure}

\section{Output states for tensor powers of random Haar-orthogonal quantum channels}\label{sec:outputs}
We consider the following model of random quantum channels. We fix an integer $k$ and a real number $t \in (0,1)$, which are the parameters of the model. For each integer $n$, consider the random quantum channel $\Phi_n: \mathcal M_{d_n}(\mathbb C) \to \mathcal M_k(\mathbb C)$, where $d_n := \lfloor tkn \rfloor$ and
\begin{equation}\label{eq:def-Phi_n}
\Phi_n(X) := [\operatorname{id}_k \otimes \operatorname{Tr}_n](V_nXV_n^\top),
\end{equation}
where $V_n : \mathbb R^{d_n} \to \mathbb R^k \otimes \mathbb R^n$ is a Haar distributed random isometry. Note that although $V_n$ is a real matrix, the matrix in \eqref{eq:def-Phi_n} is an element of $\mathcal M_{kn \times d_n}(\mathbb C)$. The random isometry $V_n$ can be obtained by truncating a Haar-distributed random orthogonal matrix $U_n \in \mathcal O(kn)$.

Now we investigate the sequence of random matrices, which are output states of tensor powers of random Haar-orthogonal quantum channels, 
with some fixed sequence of input states. 
More precisely, given a fixed sequence of input states $\rho_n = \psi_n\psi_n^*$, with $\psi_n \in \mathbb C^{rd_n}$, $\|\psi_n\|=1$, let
 $$Z(\rho_n):= \Phi_n^{\otimes r}(\rho_n) \in \mathcal M_{k^r}(\mathbb C).$$
 Our goal in this section will be to characterize the asymptotic behavior of the sequence of random matrices $Z(\rho_n)$. In this setting, the parameters $r,k,t$ are fixed. 
 
The first result is a formula for the moments of the random matrices $Z(\rho_n)$. Let $p \geq 1$ be the order of the moment and we wish to compute $\mathbb E \operatorname{Tr} Z(\rho_n)^p$. We shall use the graphical orthogonal Weingarten formula from Section \ref{sec:Wg-O}. 
We have depicted the diagram for $\operatorname{Tr}Z(\rho_n)^2$, in the case $r=3$, in Figure \ref{fig:trace-p2-r3}. 
 
\begin{figure}[htbp]
\includegraphics{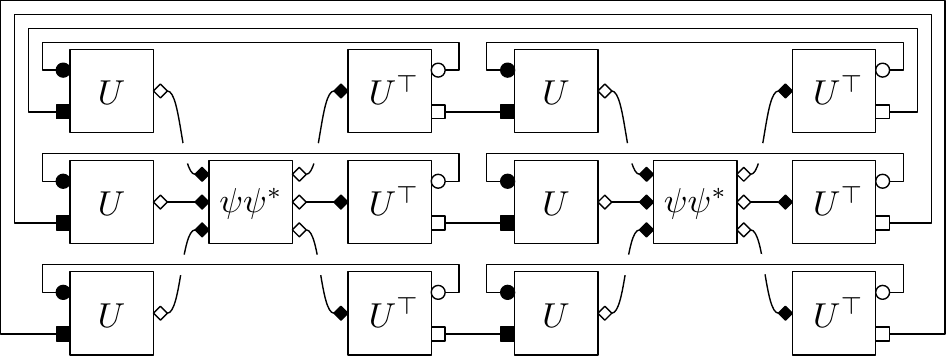}
\caption{A representation of the diagram for the $p=2$ moment of the random matrix $Z(\rho_n)$, in the case where $r=3$ copies of the quantum channel are acting on an input $\rho_n = \psi \psi^*$. Circular decorations correspond to the vector space $\mathbb C^n$, rectangular decorations correspond to $\mathbb C^k$, while diamond-shaped decorations correspond to $\mathbb C^{d_n}$.} 
\label{fig:trace-p2-r3}
\end{figure}

The diagram corresponding to the $p$-th moment contains $p \times r \times 2$ random orthogonal matrices $U \in \mathcal O(kn)$. We shall index these matrices by a triple $[i,x,P]$, where
 \begin{itemize}
\item the label $i \in \{1, \ldots, p\}$ indicates the index of the copy of the matrix $Z(\rho_n)$ the $U$ box belongs to;
\item the label $x \in \{1, \ldots, r\}$ denotes the index of the channel $\Phi_n$ in the tensor power;
\item the position label $P \in \{L,R\}$ indicates whether the box $U$ appears on the ``left'' side of the picture or on the ``right'' side (i.e.~the matrix $U$ appears without or with a transposition in \eqref{eq:def-Phi_n}).
\end{itemize}
We introduce now two permutations which encode the initial wiring (tensor contractions) appearing in the diagram. To this end, we identify the set of integers $\{1, \ldots, 2pr\}$ with the set of triples $[i,x,P]$ described above. We put
\eqb{\label{eq:delta-gamma}
\delta &:= \prod_{i=1}^p \prod_{x=1}^r ([i,x,L], [i,x,R])\\
\gamma &:=  \prod_{i=1}^p \prod_{x=1}^r ([i,x,L], [i-1,x,R]).
}
In the second equation above, we abuse notation and write $[0,x,P]:=[p,x,P]$ for any index $x$ and position $P$. It is important to notice that both permutations above are products of $pr$ disjoint transpositions, so $\delta, \gamma \in \tilde{\mathcal S}_{2pr}$. As we shall see, the permutations $\delta, \gamma$ encode the wirings corresponding to the partial trace (for each quantum channel) and, respectively, to the trace appearing in the moment of $Z(\rho_n)$. 

The graphical formulation of the Weingarten formula for integrals over the orthogonal group $\mathcal O(kn)$ gives
\begin{equation}\label{eq:sum-Wg}
\mathbb E \operatorname{Tr} Z(\rho_n)^p = \sum_{\alpha, \beta \in  \tilde{\mathcal S}_{2pr}} \mathcal D_{\alpha,\beta} \operatorname{Wg}_{kn}(\alpha, \beta),
\end{equation}
where the sum ranges over pairs $(\alpha, \beta)$ of pairings of  the set of $2rp$ boxes containing the random isometry $U$; the permutation $\alpha$ is responsible for pairing the ``outputs'' of the boxes (corresponding to black labels), while $\beta$ pairs the inputs (i.e.~white labels). Let compute explicitly the content of a given diagram $\mathcal D_{\alpha,\beta}$:
\begin{enumerate}
\item Loops corresponding to the partial traces in the quantum channel. Since the original wiring of the boxes corresponding to these loops is encoded by the permutation $\delta$, the contribution of these loops is $n^{\#(\delta\alpha)/2}$, by Lemma \ref{lem:connected-components-pairings}.
\item Loops coming from the matrix multiplication, giving a total contribution of $k^{\#(\gamma\alpha)/2}$ (for the same reasons as above). 
\item The contribution of the input state, let us call it $f_\beta(\rho_n)$ for now. 
\end{enumerate}

Let us bound the contribution of the input state $f_\beta(\rho_n)$. To this end, notice that $f_\beta(\rho_n) = \operatorname{Tr}[(\rho_n)^{\otimes p} M(\beta)]$, where $M(\beta) \in \mathcal M_{d_n}(\mathbb C)^{\otimes pr}$ is a matrix encoding the pairing $\beta$, having $pr$ inputs corresponding to labels $[i,x,L]$ and $pr$ outputs corresponding to labels $[j,y,R]$, see Figure \ref{fig:M-beta} for an example. 

\begin{figure}[htbp]
\includegraphics{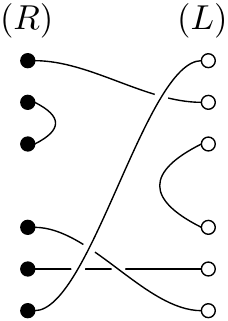}
\caption{An example for the diagram of the matrix $M(\beta)$ encoding the pairing $\beta$ in the case $p=2$, $r=3$.} 
\label{fig:M-beta}
\end{figure}

Let us define, for a pairing $\beta \in \tilde{\mathcal S}_{2q}$ where $q=pr$, its number of \emph{bumps} $\flat(\beta)$ 
as the number of pairs inside $\beta$ which connect elements on the $R$ ``side''. For the pairing $\beta$ in Figure \ref{fig:M-beta}, 
we have $\flat(\beta) = 1$, since there is only  one ``bump'' on the $R$ side. It is obvious that the number of ``bumps'' on the $L$ side is also $\flat(\beta)$, and that, up to multiplying from the left and from the right with some unitary operators, 
the matrix $M(\beta)$ is a tensor product of $\flat(\beta)$ unnormalized maximally entangled states with the identity operator up to rotations. In particular, we have $ \|M(\beta)\|_\infty = d_n^{\flat(\beta)}$, and thus, using H\"older's inequality, we conclude that 
\eq{\label{eq:bound-f}|f_\beta(\rho_n)| \leq d_n^{\flat(\beta)}.}
In order to get a better understanding of the number of bumps of a pairing, let us call a pairing $\tau$ \emph{transverse} if it maps the $L$ side to the $R$ side and vice-versa. In other words, $\tau$ is transverse if for all $(i,x) \in [p] \times [r]$, $\tau([i,x,L]) = [*,*,R]$, and $\tau([i,x,R]) = [*,*,L]$. Note that transverse pairings have zero bumps. We claim the following expression for the number of bumps of a given pairing $\beta$: 

\begin{lemma}\label{lemma:bumps-min}\label{lemma:bumps-minimizer}
For $\beta \in \tilde S_{2q}$
\begin{equation}\label{eq:bumps-min}
2\flat(\beta) = \min_{\tau \text{ transverse}} |\tau \beta|.
\end{equation}
Moreover, the minimum is achieved if and only if $\tau = \tau_1 \oplus  \tau_2$. Here,
$\tau_1 = \prod_{i=1}^{2\flat}(r_i,l_i)$ where for $1\leq j \leq \flat$ each pair 
$\{r_{2j-1},r_{2j}\}$ or $\{l_{2j-1},l_{2j}\}$  supports  a bump in R or L side, respectively,
and $\tau_2 = \prod_{i=2\flat+1}^{2q}(r_i,l_i)$ where $(r_i,l_i) \in \beta$.
\end{lemma}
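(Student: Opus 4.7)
\medskip

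\noindent\textbf{Proof plan.} The plan is to translate the statement into a counting problem on the graph $G_{\tau,\beta}$ introduced in Section~\ref{sec:combinatorics}. By Lemma~\ref{lem:connected-components-pairings}, $|\tau\beta|/2 = q-\#\{\text{components of }G_{\tau,\beta}\}$, so \eqref{eq:bumps-min} is equivalent to the assertion
\[
\max_{\tau\text{ transverse}} \#\{\text{components of }G_{\tau,\beta}\} \;=\; q-\flat(\beta),
\]
and the equality case must pick out exactly the $\tau$'s of the described form.

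Since $\tau,\beta\in\tilde{\mathcal S}_{2q}$, the graph $G_{\tau,\beta}$ is $2$-regular, and every connected component is a cycle whose edges alternate between $\tau$-edges and $\beta$-edges; in particular each cycle has even length $2s$. A cycle of length $2$ ($s=1$) corresponds to an edge shared by $\tau$ and $\beta$, and because $\tau$ is transverse, any shared edge must lie among the $q-2\flat(\beta)$ transverse edges of $\beta$. Letting $c_s$ denote the number of cycle components of length $2s$, this immediately gives $c_1\leq q-2\flat(\beta)$.

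Counting vertices gives $\sum_{s\geq 1}2s\,c_s=2q$, hence $\sum_{s\geq 2}s\,c_s=q-c_1$, and using the elementary inequality $c_s\leq \tfrac{s}{2}c_s$ for $s\geq 2$ yields
\[
\#\{\text{components}\}\;=\;c_1+\sum_{s\geq 2}c_s\;\leq\;c_1+\tfrac12\sum_{s\geq 2}s\,c_s\;=\;\tfrac{q+c_1}{2}\;\leq\;q-\flat(\beta),
\]
which establishes \eqref{eq:bumps-min}.

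For the equality characterization one needs simultaneously $c_1=q-2\flat(\beta)$ and $c_s=0$ for all $s\geq 3$. The first condition forces $\tau$ to coincide with $\beta$ on every transverse $\beta$-edge; this produces the summand $\tau_2$ of the statement. The second condition forces the remaining $4\flat(\beta)$ bump vertices to decompose into exactly $\flat(\beta)$ alternating $4$-cycles. The main (mild) obstacle is then a short case check: in such a $4$-cycle the two $\tau$-edges swap L and R, while the two $\beta$-edges are bumps, so the configurations \emph{two L-bumps} or \emph{two R-bumps} are ruled out by parity, leaving only the configuration of one L-bump and one R-bump joined by two $\tau$-edges. Indexing the vertices as in the statement, this is exactly the form of $\tau_1$, and conversely any such $\tau_1\oplus\tau_2$ clearly realizes $c_1=q-2\flat(\beta)$ and $\flat(\beta)$ length-$4$ cycles, completing both directions.
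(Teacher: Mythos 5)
Your proof is correct and follows essentially the same route as the paper's: both arguments come down to counting the cycles of $\tau\beta$ (equivalently, via Lemma~\ref{lem:connected-components-pairings}, the components of $G_{\tau,\beta}$) and showing that transversality of $\tau$ forces every component meeting a bump to have length at least $4$, with equality exactly when the bumps pair off into alternating $4$-cycles. Your bookkeeping via $c_1\le q-2\flat(\beta)$ and the tally $\sum_s 2s\,c_s=2q$ is a slightly cleaner packaging of the paper's reduction to the all-bump case plus its bound $\mathrm{card}(c_i)\ge 2k_i$, and your parity check in the $4$-cycles correctly recovers the stated form of $\tau_1\oplus\tau_2$.
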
 

\begin{proof}
To prove our claim, we can assume without loss of generality that $2\flat(\beta)=q$, 
i.e., all $2q$ elements are supporting elements of bumps, and $\tau_2 =0$.  
This is because for each transposition $(r,l) \in \beta$ where $r$ and $l$ are from R and L sides, respectively,
we can restrict ourselves to transverse $\tau$ such that $(r,l) \in \tau$ in search for the minimum of $|\tau \beta|$.

To begin with, we prove $\leq$ in \eqref{eq:bumps-min}. 
Consider the bumps on $R$ side and name the supporting elements in pairs 
by $\{r_1,r_2\},\ldots ,\{r_{2\flat -1},r_{2\flat}\}$ where $\flat =\flat(\beta)$.
Then, for a transverse $\tau \in \tilde S_{2q}$ we have the following mapping of $\tau\beta$: for $1\leq j \leq \flat$
\eqb{\label{resolving-transverse-map}
r_{2j-1} &\mapsto l_{2j} \\
r_{2j} &\mapsto l_{2j-1} 
} 
for some distinctive elements $l_1,\ldots,l_{2\flat}$ from $L$ side,
i.e., $\tau(r_i) = l_i$ for $1 \leq i \leq 2\flat$.
Suppose $\tau\beta$ consists of disjoint cycles, say, $c_1, \ldots, c_m$, so that 
\eq{\label{eq:decomposition-tau-gamma}
|\tau\beta| = \sum_{i=1}^m (\mathrm{card}(c_i)-1)
}
where $\mathrm{card}(c_i)$ is the cardinality of cycle $c_i$.
Here, we have $m \leq 2\flat$ based on the comment at the beginning of this proof.
Now, each mapping in \eqref{resolving-transverse-map} constitutes a part of some cycle.
If $c_i$ is related to $k_i$ mappings in \eqref{resolving-transverse-map}, then $\mathrm{card} (c_i)  \geq 2k_i$.
This implies that 
\eq{
|\tau\beta| \geq \sum_{i=1}^m (2 k_i -1) = 4 \flat - m \geq 2 \flat
}
The equality holds if and only if $m=2\flat$ and $\mathrm{card}(c_i) = 2$.
In this case, the condition $\tau \beta (l_{2j}) = r_{2j-1}$ implies that $\beta(l_{2j}) = l_{2j-1}$.
This complets the proof. 
\end{proof}

\begin{lemma}\label{lemma:geodesic-ab}
Given $4m$ elements $\{l_1, \ldots ,l_{2m},r_1,\ldots ,r_{2m}\}$, define two permutations in $\tilde S_{4m}$.
\eqb{
\hat\delta&= \prod_{i=1}^{2m} (r_i,l_i) = \prod_{i=1}^m (r_{2i-1}, l_{2i-1}) (r_{2i}, l_{2i})\\
\hat\beta&= \prod_{i=1}^m (r_{2i-1}, r_{2i}) (l_{2i-1}, l_{2i}).
}
Then, $\hat\alpha \in \tilde S_{4m}$ such that $\hat\delta -\hat \alpha - \hat\beta$ is of the form:
\eq{
\hat\alpha =  \prod_{i \in\Lambda}  (r_{2i-1}, r_{2i}) (l_{2i-1}, l_{2i}) \prod_{i\in [m]\setminus \Lambda} (r_{2i-1}, l_{2i-1}) (r_{2i}, l_{2i})
}
for some $\Lambda \subseteq [m]$. Here we used the notation from \eqref{eq:geodesic-notation}.
\end{lemma}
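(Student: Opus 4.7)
The plan is to convert both the hypothesis $\hat\delta - \hat\alpha - \hat\beta$ and the conclusion into statements about common transpositions of $\hat\alpha$ with $\hat\delta$ and $\hat\beta$, using Lemma~\ref{lem:connected-components-pairings} as the main tool to pass between distances and connected-component counts.

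First, I will compute $|\hat\delta\hat\beta|$ directly. The pairings $\hat\delta$ and $\hat\beta$ have no common transposition (one pairs each $r$ with an $l$, the other pairs $r$'s with $r$'s and $l$'s with $l$'s), and the graph $G_{\hat\delta,\hat\beta}$ splits into exactly $m$ four-cycles, one supported on each block $B_i := \{r_{2i-1},r_{2i},l_{2i-1},l_{2i}\}$. Hence by Lemma~\ref{lem:connected-components-pairings}, $|\hat\delta\hat\beta| = 4m - 2m = 2m$.

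Next, for any $\hat\alpha \in \tilde S_{4m}$, I set $k := |\hat\alpha \cap \hat\delta|$ and $k' := |\hat\alpha \cap \hat\beta|$, counting common transpositions. Writing $c(\sigma,\tau)$ for the number of connected components of $G_{\sigma,\tau}$, every component of $G_{\hat\delta,\hat\alpha}$ is an even alternating cycle: shared transpositions yield $k$ double-edge components on $2$ vertices each, while all other components have size $\geq 4$. This gives $c(\hat\delta,\hat\alpha) \leq k + (4m-2k)/4 = m + k/2$, and analogously $c(\hat\alpha,\hat\beta) \leq m + k'/2$. By Lemma~\ref{lem:connected-components-pairings} the geodesic hypothesis rewrites as $c(\hat\delta,\hat\alpha) + c(\hat\alpha,\hat\beta) = 3m$, so $k + k' \geq 2m$. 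But $\hat\alpha$ has only $2m$ transpositions and $\hat\delta \cap \hat\beta = \emptyset$, so $k + k' \leq 2m$. Equality forces every transposition of $\hat\alpha$ to lie in $\hat\delta \cup \hat\beta$.

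Finally, since every transposition in $\hat\delta \cup \hat\beta$ is supported inside a single block $B_i$, the pairing $\hat\alpha$ must respect the partition $\{B_i\}_{i=1}^m$. Within one block, a direct enumeration of the four available transpositions shows that the only disjoint pairs are $\{(r_{2i-1},l_{2i-1}),(r_{2i},l_{2i})\}$ (entirely from $\hat\delta$) and $\{(r_{2i-1},r_{2i}),(l_{2i-1},l_{2i})\}$ (entirely from $\hat\beta$); any ``mixed'' choice shares a vertex. Letting $\Lambda \subseteq [m]$ record the blocks on which $\hat\alpha$ agrees with $\hat\beta$, we obtain the claimed form. A short block-by-block check confirms the converse: such an $\hat\alpha$ satisfies $|\hat\delta\hat\alpha| = 2|\Lambda|$ and $|\hat\alpha\hat\beta| = 2(m-|\Lambda|)$, summing to $2m$. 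The main technical obstacle is the component bound $c(\hat\delta,\hat\alpha) \leq m + k/2$ together with its equality case, since that is exactly what forces the block-respecting structure; the rest is finite case analysis on a 4-element set.
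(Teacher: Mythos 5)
Your proof is correct, and it takes a genuinely different — and more complete — route than the paper's. The paper's proof decomposes $[4m]$ into the blocks $B_i=\{r_{2i-1},r_{2i},l_{2i-1},l_{2i}\}$ at the outset, asserts that a geodesic point must respect this decomposition "because $\hat\delta$ and $\hat\beta$ both respect it," and then enumerates the two admissible pairings per block. The assertion that the geodesic point inherits the block structure is exactly the step the paper leaves unargued, and it is the step you actually prove: by combining Lemma~\ref{lem:connected-components-pairings} with the observation that non-shared components of $G_{\hat\delta,\hat\alpha}$ have at least $4$ vertices, you get $c(\hat\delta,\hat\alpha)\le m+k/2$ and $c(\hat\alpha,\hat\beta)\le m+k'/2$, and the geodesic identity $c(\hat\delta,\hat\alpha)+c(\hat\alpha,\hat\beta)=3m$ together with $\hat\delta\cap\hat\beta=\emptyset$ forces $k+k'=2m$, i.e.\ every transposition of $\hat\alpha$ lies in $\hat\delta\cup\hat\beta$; block-respecting then falls out for free, and your finite check within a block matches the paper's. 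Your converse verification ($|\hat\delta\hat\alpha|=2|\Lambda|$, $|\hat\alpha\hat\beta|=2(m-|\Lambda|)$) confirms that every $\hat\alpha$ of the stated form is indeed on the geodesic, which the paper implicitly uses later. In short: the paper's argument is shorter but relies on an unjustified reduction to blocks; your global counting argument supplies that missing justification at the cost of a slightly longer setup.
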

\begin{proof}
We decompose 
$$\{l_1, \ldots ,l_{2m},r_1,\ldots ,r_{2m}\} = \bigsqcup_{i=1}^m \{l_{2i-1},l_{2i},r_{2i-1},r_{2i}\}$$ and work on each component for the geodesic because $\hat\delta$ and $\hat\beta$ both respect this decomposition.
Note that, at fixed $i$, the only elements on the geodesic are the restrictions of $\hat \delta$ and $\hat \beta$ on the 4-element set:
\eqb{
\mathrm{dist}\big( (r_{2i-1}, r_{2m}) (l_{2i-1}, l_{2m}),(r_{2i-1}, l_{2i-1}) (r_{2m}, l_{2m})\big) =2
}
while the intermediate permutations do not belong to $\tilde S_{4m}$. The proof is now complete, since each block of $\alpha$ must be either of $\hat \delta$ type or of $\hat \beta$ type.
\end{proof}

With these ingredients in hand, and with the asymptotic formula for the Weingarten function from \eqref{eq:Wg-asympt}, 
we can calculate the general term in the sum \eqref{eq:sum-Wg}  and upper bound its absolute value 
as follows (remember our notations of $\delta$ and $\gamma$ in \eqref{eq:delta-gamma}): 
\eq{\label{eq:moment1}
\mathcal D_{\alpha,\beta} \operatorname{Wg}_{kn}(\alpha, \beta) 
&= n^{\#(\delta\alpha)/2} k^{\#(\gamma\alpha)/2} f_\beta(\psi_n)\operatorname{Wg}_{kn}(\alpha,\beta),
\qquad\text{and then}\\
\label{eq:moment2}
|\mathcal D_{\alpha,\beta} \operatorname{Wg}_{kn}(\alpha, \beta)| 
&\leq (1+o(1)) \left[n^{\#(\delta\alpha)/2} k^{\#(\gamma\alpha)/2} (tkn)^{\flat(\beta)} (kn)^{-pr - |\alpha\beta|/2}|\operatorname{\textnormal{M\"ob}}(\alpha,\beta)|\right].
}
By using \eqref{eq:bumps-min} in Lemma \ref{lemma:bumps-min}
the exponent of $n$ (the only variable which grows) in the RHS of \eqref{eq:moment2} reads
\begin{align}
\#(\delta\alpha)/2 +  \min_{\tau \text{ transverse}} |\tau \beta|/2 - pr - |\alpha\beta|/2  
&= (\min_{\tau \text{ transverse}} |\tau \beta| - |\delta\alpha| - |\alpha\beta|)/2\notag\\
&\leq (\min_{\tau \text{ transverse}} |\tau \beta| - |\delta\beta| )/2 \label{eq:bound-a}\\
&\leq 0, \label{eq:bound-b}
\end{align}
where we have used the triangle inequality and the fact that the permutation $\delta$ is transverse. 
To identify the leading order terms in \eqref{eq:sum-Wg}
we then try to ignore as many terms as possible by getting rid of terms which do not saturate the three bounds
\eqref{eq:bound-f}, \eqref{eq:bound-a} and \eqref{eq:bound-b}.
Note that we consider the bound \eqref{eq:bound-f} only asymptotically, as one can see below.

First, the equality $\displaystyle \min_{\tau \text{ transverse}} |\tau \beta|= |\delta\beta| $ must hold in \eqref{eq:bound-b}.
Since $\delta$ is a transverse, Lemma \ref{lemma:bumps-minimizer} shows that 
$\beta$ must be of the form
\begin{align} 
\label{eq:beta-dominant} \beta &= \prod_{B} ([i_1(s),x_1(s),L],[i_2(s),x_2(s),L])([i_1(s),x_1(s),R],[i_2(s),x_2(s),R])\\
\nonumber &\quad \times \prod_{B^c} ([i_3(t), x_3(t),L],[i_3(t),x_3(t),R]).
\end{align}
In other words, $\beta$ must be a product of symmetrical bumps and horizontal wires. 
Here, $B \in \mathcal C_p$, and $\mathcal C_p$ is defined by a set of particular types of transpositions:
\eq{
\mathcal C_p &= \Big\{  \big([i_1(s),x_1(s)], [i_2(s),x_2(s)]\big)  \Big\}_{s=1}^m: \quad m \in \left[\left\lfloor \frac{pr}{2} \right\rfloor \right],\\
&[i_j(s),x_j(s)] \not = [i_l(t),x_l(t)] \text{ unless $j=l$ and $s=t$}\Big\}.
}
Also, we abuse notations by writing $B^c$ to denote fixed points in $[pr]$ by all transpositions in $B$.

Second, the equality in \eqref{eq:bound-a} holds if and only if $\alpha$ lies on the geodesic between $\delta$ and $\beta$. 
This is equivalent via Lemma \ref{lemma:geodesic-ab} to the fact that $\alpha$ has the following form
for $A \in \mathcal C_p$ such that $A \subseteq B$ and:
\begin{align}
\label{eq:alpha-dominant}\alpha &= \prod_{A} ([i_1(s),x_1(s),L],[i_2(s),x_2(s),L])([i_1(s),x_1(s),R],[i_2(s),x_2(s),R])\\
\nonumber &\quad \times \prod_{A^c} ([i_3(t), x_3(t),L],[i_3(t),x_3(t),R]).
\end{align}
In other words, $\alpha$ consists of horizontal lines and a \emph{subset} of the bumps of $\beta$.

Third, we discuss when the equality in \eqref{eq:bound-f} is asymptotically saturated when $p=2$. 
To this end, we define $\flat_{\mathrm{in}}(\beta)$ the number of ``non-trespassing'' bumps for $\beta$ defined in \eqref{eq:beta-dominant}.
For this aim, we define
\eq{\label{eq:non-trespassing-b}
B_{\mathrm{in}} =  \left\{ \big([i_1,x_1], [i_2,x_2]\big) \in B:  i_1 = i_2 \right \}, 
}
where ``$\in$'' means that the left transposition is one of transpositions constituting $B$,
so that we have the definition of $\flat_{\mathrm {in}}(\beta) = \left|B_{\mathrm{in}}\right|$.

Now, we need a lemma:
\begin{lemma}\label{lemma:non-trespassing}
For $\beta$ defined in \eqref{eq:beta-dominant}, we have the following bound for $p=2$.
\eq{
|f_\beta(\rho_n)| \leq d_n^{\flat_{\mathrm{in}}(\beta)}
}
\end{lemma}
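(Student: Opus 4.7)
My plan is to partial-trace out the identity wires of $M(\beta)$, factorize the remaining bumped part according to which bumps are internal to one copy of $\rho_n$ and which cross the two copies, and then close with Hilbert--Schmidt Cauchy--Schwarz after invoking the standard Bell identity $\operatorname{Tr}[(X\otimes Y)|\Omega\rangle\langle\Omega|] = \operatorname{Tr}(XY^T)$ for the crossing bumps.

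Concretely, for each $(i,x)\in B^c$ the matrix $M(\beta)$ carries an identity factor, which may be absorbed into a partial trace of $\rho_n^{\otimes 2}$. Writing $\sigma_i:=\operatorname{Tr}_{B_i^c}\rho_n$ for the reduced state on the bump positions $B_i$ of copy $i$, one obtains
\eq{
f_\beta(\rho_n) = \operatorname{Tr}\bigl[(\sigma_1\otimes\sigma_2)\,M_{\mathrm{bump}}\bigr],
}
where $M_{\mathrm{bump}}$ is the tensor product of the unnormalized Bell projectors $|\Omega\rangle\langle\Omega|$ attached to the bumps. Split $B_i = B_{i,\mathrm{in}}\sqcup B_{i,\mathrm{out}}$ depending on whether the attached bump is non-trespassing or trespassing; this factorizes $M_{\mathrm{bump}} = |\tilde\Omega_1\rangle\langle\tilde\Omega_1|\otimes|\tilde\Omega_2\rangle\langle\tilde\Omega_2|\otimes E_{\mathrm{out}}$, with $|\tilde\Omega_i\rangle$ the tensor product of the $\flat_{\mathrm{in},i}$ internal Bell states on $B_{i,\mathrm{in}}$ (so $\|\tilde\Omega_i\|^2 = d_n^{\flat_{\mathrm{in},i}}$ and $\flat_{\mathrm{in}} = \flat_{\mathrm{in},1}+\flat_{\mathrm{in},2}$) and $E_{\mathrm{out}}$ the product of Bell projectors linking positions in $B_{1,\mathrm{out}}$ to positions in $B_{2,\mathrm{out}}$. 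Defining the partial sandwich $A_i := \langle\tilde\Omega_i|\sigma_i|\tilde\Omega_i\rangle$ on $(\mathbb C^{d_n})^{\otimes B_{i,\mathrm{out}}}$, positivity of $\sigma_i$ gives $A_i\succeq 0$, and
\eq{
\operatorname{Tr} A_i = \langle\tilde\Omega_i|\,\operatorname{Tr}_{B_{i,\mathrm{out}}}\!\sigma_i\,|\tilde\Omega_i\rangle \leq \|\operatorname{Tr}_{B_{i,\mathrm{out}}}\!\sigma_i\|_\infty\,\|\tilde\Omega_i\|^2 \leq d_n^{\flat_{\mathrm{in},i}},
}
since $\operatorname{Tr}_{B_{i,\mathrm{out}}}\sigma_i$ is a density matrix.

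The trespassing bumps encode a bijection $\pi\colon B_{1,\mathrm{out}}\to B_{2,\mathrm{out}}$, and iterating $\operatorname{Tr}[(X\otimes Y)|\Omega\rangle\langle\Omega|] = \operatorname{Tr}(XY^T)$ bump by bump rewrites the remaining trace as $f_\beta(\rho_n) = \operatorname{Tr}\bigl[A_1\,(\pi^*A_2\pi)^T\bigr]$. Hilbert--Schmidt Cauchy--Schwarz together with the bound $\|A\|_2\leq\operatorname{Tr} A$ for positive semidefinite $A$ (immediate from $\sum\lambda_k^2\leq(\sum\lambda_k)^2$) then yields
\eq{
|f_\beta(\rho_n)| \leq \|A_1\|_2\|A_2\|_2 \leq \operatorname{Tr}(A_1)\operatorname{Tr}(A_2) \leq d_n^{\flat_{\mathrm{in},1}+\flat_{\mathrm{in},2}} = d_n^{\flat_{\mathrm{in}}(\beta)},
}
which is the desired inequality. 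The main obstacle to anticipate is the bookkeeping of tensor factors when reducing $E_{\mathrm{out}}$ to a standard Bell projector via $\pi$; apart from that, the argument is elementary linear algebra on positive operators and relies crucially on $p=2$, so that both $\rho_n^{\otimes 2}=\sigma_1\otimes\sigma_2$ factorizes on the two copies and each crossing Bell factor connects exactly two subsystems.
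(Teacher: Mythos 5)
Your argument is correct and follows essentially the same route as the paper's proof: identify $M(\beta)$ as a tensor product of unnormalized Bell projectors on the bumps and identities on $B^c$, absorb the within-copy bumps via the norm bound $\|\tilde\Omega_i\|^2=d_n^{\flat_{\mathrm{in},i}}$ (the paper's $\hat\omega\le I$ in normalized form), and reduce the cross-copy bumps to a $\operatorname{Tr}[XY^T]$-type quantity bounded by the product of traces. Your bookkeeping of the partial traces and the bijection $\pi$ is if anything slightly more explicit than the paper's, but no new idea is involved.
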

\begin{proof}
Let $\omega_C$ be a maximally entangled state associated to $C \in \mathcal C_p$,
i.e.~a tensor product of maximally entangled states, each of which is defined by a transposition in $C$ (see Section \ref{sec:qit} for the definitions).
Then, using the general ``linearization trick''
$$\operatorname{Tr}(XY^T) = \operatorname{Tr}[\omega (X \otimes Y) \omega],$$
we get
\eqb{
f_\beta (\rho_n)& = d_n^{\flat(\beta)}\cdot \trace_{B^c} \left[ \left(\hat\omega_{B_{\mathrm{in}}}^* \otimes \hat\omega_{B \setminus B_{\mathrm{in}}}^*\otimes I_{B^c} \right)
\rho_n\otimes \rho_n\left( \hat\omega_{B_{\mathrm{in}}}\otimes \hat\omega_{B \setminus B_{\mathrm{in}}} \otimes I_{B^c} \right) \right]\\
&\leq d_n^{\flat(\beta)}  \trace_{B_{\mathrm{in}} \otimes B^c} \left[ \left(I_{B_{\mathrm{in}}}\otimes \hat\omega_{B \setminus B_{\mathrm{in}}}^*\otimes I_{B^c} \right)
\rho_n\otimes \rho_n\left( I_{B_{\mathrm{in}}}\otimes \hat\omega_{B \setminus B_{\mathrm{in}}} \otimes I_{B^c} \right) \right]\\
&= d_n^{\flat_{\mathrm{in}}(\beta)}  \trace \left[\Psi^{(1)}\Psi^{(2)T} \right] \leq d_n^{\flat_{\mathrm{in}}(\beta)} 
}
where $\Psi^{(1)}$ and $\Psi^{(2)}$ are reduced density operators of $\rho_n$ in the first and second spaces, and we have used the trivial matrix inequality $\hat \omega \leq I$.
\end{proof}
This means that we can reduce candidates of leading order terms in \eqref{eq:sum-Wg}, and for writing purpose
we define the set of non-trespassing bumps by
\eq{
\mathcal C_{p,\mathrm{in}} = \{B \in \mathcal C_p: B_{\mathrm{in}} = B \}
}
Note that trivially $\mathcal C_1 = \mathcal C_{1, \mathrm{in}}$.
Then, finally, we can state the result giving the asymptotic moments of the sequence of random matrices $Z(\psi_n)$. 
From here on, we  identify $\alpha, \beta$ with $A,B \in \mathcal C_p$.
\begin{theorem}\label{thm:moments}
For any given sequence of input states $\rho_n$, \\
1) All moments of $Z(\rho_n)$ are expressed as
\eq{
 (1+o(1))  \sum_{\substack{B\in\mathcal C_p\\A \subseteq B}} 
k^{\frac{\#(\gamma \alpha)}{2} +|A|-pr}\cdot t^{|B|}\cdot g_B(\rho_n) \cdot (-1)^{|B|-|A|}
}
where 
\eq{
g_B(\rho_n)=  \frac{f_\beta(\rho_n)}{(tnk)^{|B|}} \leq 1
}
\\
2) For the first and second moments of $Z(\rho_n)$ one can replace $\mathcal C_p$ by $\mathcal C_{p,\mathrm{in}}$.

\end{theorem}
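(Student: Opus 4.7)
The plan is to assemble the ingredients already established in the discussion preceding the statement. Starting from the graphical Weingarten expansion \eqref{eq:sum-Wg}, the three bounds \eqref{eq:bound-f}, \eqref{eq:bound-a} and \eqref{eq:bound-b} combined show that only pairs $(\alpha,\beta)$ with $\beta$ of the form \eqref{eq:beta-dominant} and $\alpha$ of the form \eqref{eq:alpha-dominant} survive in the large $n$ limit, which is exactly the parametrisation by $A,B\in\mathcal C_p$ with $A\subseteq B$. My first step is therefore to restrict the sum in \eqref{eq:sum-Wg} to this indexing set, absorbing all subleading contributions into the factor $1+o(1)$.

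Next, for such a dominant pair $(\alpha,\beta)$, I would extract the cycle structures of $\delta\alpha$ and $\alpha\beta$ by working block-by-block in the decomposition induced by $A\subseteq B$. On each block for $s\in A$ the restrictions of $\alpha$ and $\beta$ coincide (both are symmetric bumps), so $\alpha\beta$ has four fixed points there while $\delta\alpha$ contributes two length-$2$ cycles. On each block for $s\in B\setminus A$, $\alpha$ uses horizontal wires and $\beta$ uses symmetric bumps, producing one $4$-vertex connected component of $G_{\alpha,\beta}$ (equivalently two length-$2$ cycles of $\alpha\beta$), while $\delta\alpha$ is trivial on those positions. On indices outside $B$ all three of $\alpha,\beta,\delta$ agree on a horizontal wire. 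Summing yields $\#(\delta\alpha)/2=pr-|A|$ and $|\alpha\beta|/2=|B|-|A|$.

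The M\"obius factor then collapses cleanly via \eqref{eq:Mob}: each $2$-vertex component of $G_{\alpha,\beta}$ contributes $(-1)^0\operatorname{Cat}_0=1$ and each $4$-vertex component contributes $(-1)^1\operatorname{Cat}_0=-1$, giving $\mob(\alpha,\beta)=(-1)^{|B|-|A|}$. Substituting these data into \eqref{eq:moment1} together with the Weingarten asymptotics \eqref{eq:Wg-asympt}, using $d_n=\lfloor tkn\rfloor$ and the definition $g_B(\rho_n):=f_\beta(\rho_n)/(tnk)^{|B|}$, the powers of $n$ arising from $n^{\#(\delta\alpha)/2}$, $(tkn)^{|B|}$ and $(kn)^{-pr-|\alpha\beta|/2}$ cancel exactly, and summing over admissible $(A,B)$ delivers the claimed expression. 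The inequality $g_B(\rho_n)\leq 1$ is just \eqref{eq:bound-f} combined with $\flat(\beta)=|B|$.

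For part (2), the case $p=1$ is immediate since any bump automatically has $i_1=i_2=1$, so $\mathcal C_1=\mathcal C_{1,\mathrm{in}}$ tautologically. For $p=2$, Lemma \ref{lemma:non-trespassing} sharpens \eqref{eq:bound-f} to $|f_\beta(\rho_n)|\leq d_n^{\flat_{\mathrm{in}}(\beta)}$; re-running the power counting with $\flat_{\mathrm{in}}(\beta)$ in place of $|B|$ in the input contribution shows that any $B$ with a trespassing bump forces a strictly negative exponent of $n$, hence such terms vanish in the limit and the surviving index set is exactly $\mathcal C_{2,\mathrm{in}}$. The main obstacle I anticipate is the careful block-by-block cycle arithmetic producing $\#(\delta\alpha)/2$, $|\alpha\beta|/2$ and the M\"obius sign simultaneously; once those are in place, the rest is mechanical bookkeeping.
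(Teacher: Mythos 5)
Your proof is correct and follows essentially the same route as the paper's: restrict \eqref{eq:sum-Wg} to the dominant pairs \eqref{eq:alpha-dominant}--\eqref{eq:beta-dominant}, compute the cycle data and the M\"obius sign, and substitute into \eqref{eq:moment1}; your explicit block-by-block derivation of $\#(\delta\alpha)/2=pr-|A|$ and $|\alpha\beta|/2=|B|-|A|$, and your spelled-out power counting for part (2), merely make explicit what the paper's proof leaves implicit. (One cosmetic nit: a $4$-vertex component of $G_{\alpha,\beta}$ contributes $(-1)^{2-1}\operatorname{Cat}_{2-1}$ rather than $\operatorname{Cat}_0$, though both equal $-1$.)
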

\begin{proof}
For pairings $\alpha$ and $\beta$ as in \eqref{eq:alpha-dominant}, resp.~\eqref{eq:beta-dominant}, the M\"obius function is given by \eqref{eq:Mob}:
$$\operatorname{\textnormal{M\"ob}}(\alpha,\beta) = (-1)^{| B \setminus A|} = (-1)^{| B|-|A|} .$$
Also note that $\flat(\beta) = |B|$ for $\beta$ in \eqref{eq:beta-dominant}.
Neglecting terms in \eqref{eq:moment1} which vanish according to the above discussions, the general moment an be written, except for the $(1+o(1))$ factor, as 
\eqb{\label{eq:moment-g}
\sum_{\alpha, \beta \text{ as in } \eqref{eq:alpha-dominant}, \eqref{eq:beta-dominant}} 
& k^{\#(\gamma \alpha) /2}\cdot (tk)^{|B|} \cdot g_B(\rho_n)\cdot  k^{-pr-|\alpha\beta|/2}\mob(\alpha,\beta)\\
&= \sum_{\substack{B\in\mathcal C_p\\A \subseteq B}} 
k^{\frac{\#(\gamma \alpha)}{2} +|A|-pr}\cdot t^{|B|}\cdot g_B(\rho_n) \cdot (-1)^{|B|-|A|}
}
which is the general formula we wanted.
Moreover we can replace $\mathcal C_p$ by $\mathcal C_{p,\mathrm{in}}$ for $p=1,2$,
based on Lemma \ref{lemma:non-trespassing} and the remark following it.
\end{proof}

Next, we calculate the average output state for a fixed input $\rho_n$.
To this end, we introduce a useful notation before going onto our theorem. 
Define for $A \in \mathcal C_1$
\eq{
T_A^{(k)} := \left[ \bigotimes_{\{i,j\} \in A} \omega_{ij} \right] \otimes \left[ \bigotimes_{s \notin A} I_s \right]
}
where we denote by $\omega$ the (un-normalized) maximally entangled state $\omega = \Omega \Omega^*$ with
$\Omega = \sum_{i = 1}^k e_i \otimes e_i \in \mathbb C^k \otimes \mathbb C^k$, see also Section \ref{sec:qit}. 
We write $\omega_{ij}$ for the operator $\omega$ acting on the copies $i$ and $j$ of the space $\mathbb C^k$. 
We also abuse notation so that  $s \notin A$ means that $s \in [r]$ stays fixed by transpositions in $A \in \mathcal C_1$.
Then, 
\begin{theorem}\label{theorem:average-output}
\eq{
\label{eq:E-Z-psi}
 \mathbb E Z(\rho_n)=   (1+o(1)) M(\rho_n)
}
where
\eq{\label{eq:M-matrix}
M(\rho_n) := \sum_{\substack{B\in\mathcal C_1\\A \subseteq B}} 
T_A^{(k)} \cdot k^{|A|-r }\cdot t^{|B|}\cdot g_{B}(\rho_n) \cdot (-1)^{|B|-|A|}.
}
\end{theorem}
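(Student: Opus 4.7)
The plan is to carry out the orthogonal Weingarten graphical expansion of Theorem \ref{thm:moments} directly on the matrix $\mathbb{E} Z(\rho_n)$, rather than on its trace moments. The diagram for $Z(\rho_n)$ is the one used for $\operatorname{Tr} Z(\rho_n)$ at $p=1$ with the cyclic closure encoded by $\gamma = \delta$ deleted, so that the $\mathbb{C}^k$ wires on the L-side (output register of $Z$) and on the R-side (input register) are left free. The formula \eqref{eq:Wg-graphical} gives
\begin{equation*}
\mathbb{E} Z(\rho_n) = \sum_{\alpha, \beta \in \tilde{\mathcal{S}}_{2r}} \mathcal{D}_{\alpha,\beta}\, \operatorname{Wg}_{kn}(\alpha,\beta),
\end{equation*}
where $\mathcal{D}_{\alpha,\beta} = n^{\#(\delta\alpha)/2}\, T_\alpha^{(k)}\, f_\beta(\rho_n)$, and $T_\alpha^{(k)}$ is the operator on $(\mathbb{C}^k)^{\otimes r} \otimes (\mathbb{C}^k)^{\otimes r}$ obtained by contracting the free $\mathbb{C}^k$ wires according to $\alpha$: each L--L pair $([1,a,L],[1,b,L])$ joins the two L-side output wires into the Bell vector $\Omega_{ab}$, each R--R pair contributes $\Omega_{ab}^*$ on the input register, and each horizontal line $([1,x,L],[1,x,R])$ produces the identity $I_x$ on channel $x$. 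For $\alpha$ of the dominant form \eqref{eq:alpha-dominant} with transposition set $A$, the symmetric L--L and R--R pairs combine into $\omega_{ab} = \Omega_{ab}\Omega_{ab}^*$ for each $(a,b) \in A$, so $T_\alpha^{(k)}$ coincides exactly with the operator $T_A^{(k)}$ of the statement.

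The selection of dominant pairs proceeds in the same way as in Theorem \ref{thm:moments}. Since $k$ is fixed, $\|T_\alpha^{(k)}\|_\infty$ is uniformly bounded in $n$ by a constant depending only on $k$, so the scalar estimate \eqref{eq:moment2} lifts verbatim to an operator-norm bound on $\mathcal{D}_{\alpha,\beta}\operatorname{Wg}_{kn}(\alpha,\beta)$. The inequalities \eqref{eq:bound-a}--\eqref{eq:bound-b} then show that any pair $(\alpha,\beta)$ failing to saturate both contributes $o(1)$ in operator norm, leaving only the pairs indexed by $A \subseteq B \in \mathcal{C}_1$ via \eqref{eq:alpha-dominant}--\eqref{eq:beta-dominant}; the bound \eqref{eq:bound-f} is automatically saturated because $\mathcal{C}_1 = \mathcal{C}_{1,\mathrm{in}}$.

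For a dominant pair, the scalar prefactor is computed exactly as in the derivation of \eqref{eq:moment-g} at $p=1$: using $\#(\delta\alpha)/2 = r - |A|$, $f_\beta(\rho_n) = (tnk)^{|B|} g_B(\rho_n)$, $|\alpha\beta| = 2(|B|-|A|)$, and $\operatorname{M\"ob}(\alpha,\beta) = (-1)^{|B|-|A|}$, all powers of $n$ cancel and one is left with the coefficient $k^{|A|-r}\, t^{|B|}\, g_B(\rho_n)\, (-1)^{|B|-|A|}$. Multiplying by $T_A^{(k)}$ and summing over $A \subseteq B$ recovers the formula \eqref{eq:M-matrix} for $M(\rho_n)$. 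The only conceptually new step compared with the proof of Theorem \ref{thm:moments} is the identification of the free-wire diagrammatic operator for dominant $\alpha$ with $T_A^{(k)}$; I would expect careful bookkeeping of the L-side versus R-side orientations of the Bell vectors to be the most error-prone part of writing this out, but no serious analytical obstacle is anticipated.
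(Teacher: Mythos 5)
Your proposal is correct and follows essentially the same route as the paper, which proves the theorem by taking the $p=1$ trace-moment formula \eqref{eq:moment-g} and ``removing the trace,'' i.e.\ replacing the scalar loop factor $k^{\#(\gamma\alpha)/2}$ (which equals $\operatorname{Tr} T_A^{(k)}$ since $\gamma=\delta$ for $p=1$) by the operator $T_A^{(k)}$. Your write-up simply makes explicit the steps the paper leaves implicit — the graphical expansion of the un-closed diagram, the operator-norm version of the bound \eqref{eq:moment2}, and the identification of the dominant free-wire contraction with $T_A^{(k)}$ — all of which check out.
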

\begin{proof}
Now we calculate ``the first moment without trace''. 
To this end,
we just replace $k^{\#(\gamma\alpha)/2}$ in \eqref{eq:moment-g} by $T_A^{(k)}$.
In fact $\trace T_A^{(k)} = k^{\frac{\#(\alpha)}{2}}=k^{|A|}$ where $\gamma=\delta$ for $p=1$.
\end{proof}

\begin{theorem}\label{thm:convergence-in-prob}
For a fixed sequence of input states $(\rho_n)_{n \geq 1}$ we have the following convergence in probability:
\eq{
\left\| Z(\rho_n) -  \mathbb E Z(\rho_n) \right\|_2 \to 0
}
\end{theorem}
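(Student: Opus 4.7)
The plan is to establish convergence in $L^2$, which implies convergence in probability by Markov's inequality:
$$\Pr\bigl(\|Z(\rho_n) - \mathbb E Z(\rho_n)\|_2 > \varepsilon\bigr) \leq \varepsilon^{-2}\,\mathbb E\|Z(\rho_n) - \mathbb E Z(\rho_n)\|_2^2.$$
Since $Z(\rho_n)$ is Hermitian,
$$\mathbb E\|Z(\rho_n) - \mathbb E Z(\rho_n)\|_2^2 = \mathbb E\operatorname{Tr} Z(\rho_n)^2 - \operatorname{Tr}\bigl[(\mathbb E Z(\rho_n))^2\bigr],$$
so it suffices to show that these two quantities agree asymptotically up to $o(1)$.

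For the second term, Theorem \ref{theorem:average-output} yields $\mathbb E Z(\rho_n) = M(\rho_n) + E_n$ with $\|E_n\|_\infty = o(1)$. Since $M(\rho_n)$ is a finite sum (at fixed $r,k$) of operators with uniformly bounded scalar coefficients ($t^{|B|}$ and $g_B(\rho_n)$ are bounded by $1$), $\|M(\rho_n)\|_\infty = O(1)$, so expanding the square gives $\operatorname{Tr}\bigl[(\mathbb E Z(\rho_n))^2\bigr] = \operatorname{Tr} M(\rho_n)^2 + o(1)$. For the first term, we invoke part 2 of Theorem \ref{thm:moments} at $p=2$, which restricts the leading-order sum to $B \in \mathcal C_{2,\mathrm{in}}$ with $A \subseteq B$.

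The crux of the argument is a factorization over the two copies, exploiting the non-trespassing property. Any $B \in \mathcal C_{2,\mathrm{in}}$ decomposes canonically as $B = B_1 \sqcup B_2$ with $B_i \in \mathcal C_1$ (the bumps of $B$ with copy-index $i$), and similarly $A = A_1 \sqcup A_2$ with $A_i \subseteq B_i$. Under this decomposition: \emph{(i)} the exponents factorize as $|B| = |B_1|+|B_2|$ and $|A|=|A_1|+|A_2|$; \emph{(ii)} the input functional factorizes as $g_B(\rho_n) = g_{B_1}(\rho_n)\,g_{B_2}(\rho_n)$, because for a non-trespassing $\beta$ the operator $M(\beta)$ is a tensor product of two single-copy operators acting on the two independent $\rho_n$'s; and \emph{(iii)} the corresponding $\alpha = \alpha_1 \sqcup \alpha_2$ satisfies $k^{\#(\gamma\alpha)/2} = \operatorname{Tr}\bigl[T_{A_1}^{(k)} T_{A_2}^{(k)}\bigr]$. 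Plugging \emph{(i)}--\emph{(iii)} into the $p=2$ Weingarten sum turns it into a product of two independent $p=1$-type sums, which is exactly the expansion of $\operatorname{Tr} M(\rho_n)^2$ read off from \eqref{eq:M-matrix}. This delivers $\mathbb E\operatorname{Tr} Z(\rho_n)^2 = \operatorname{Tr} M(\rho_n)^2 + o(1)$, closing the variance estimate.

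The main obstacle is the combinatorial identity \emph{(iii)}. Although the permutation $\gamma$ is transverse and thus naively mixes the two copies inside the product $\gamma\alpha$, for a non-trespassing $\alpha$ the cycles of $\gamma\alpha$ reorganize into exactly the loops appearing in the graphical representation of $\operatorname{Tr}\bigl[T_{A_1}^{(k)} T_{A_2}^{(k)}\bigr]$. Verifying this requires a careful diagrammatic analysis of how the trace-closing wires of $\gamma$, the horizontal wires of $\alpha$, and the bump patterns of $A_1, A_2$ combine, in the spirit of Lemma \ref{lem:connected-components-pairings}.
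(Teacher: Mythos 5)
Your proposal is correct and follows essentially the same route as the paper: a variance bound via Chebyshev/Markov, with $\mathbb E\operatorname{Tr}Z(\rho_n)^2$ computed from part 2 of Theorem \ref{thm:moments}, the decomposition $B=B_1\sqcup B_2$ over the two copies, the factorization $g_B = g_{B_1}g_{B_2}$, and the identification of the resulting product with $\operatorname{Tr}[M(\rho_n)^2]$. The combinatorial identity you flag in \emph{(iii)} as the main remaining obstacle is asserted in the paper at the same level of detail (it is implicit in the equality $k^{\#(\gamma(\alpha_1\oplus\alpha_2))/2}=\operatorname{Tr}[T_{A_1}^{(k)}T_{A_2}^{(k)}]$), so you have not omitted anything the paper supplies.
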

\begin{proof}
Using the second part of Theorem \ref{thm:moments}, the second moment of $Z(\rho_n)$ is a sum indexed by sets $B \in \mathcal C_{2,\mathrm{in}}$. For such a $B$,  we write $B = B_1 \oplus B_2$ where these two belong to blocks with $i=1,2$ respectively, so that, using the notation from Theorem \ref{thm:moments}, we can factorize
\eq{
g_B (\rho_n) = g_{B_1}(\rho_n) \cdot g_{B_2}(\rho_n)
}
Then, 
the formula in \eqref{eq:moment-g} with $p=2$, which represents the second moment, up to $o(1)$ terms, changes into:
\eqb{
& \sum_{\substack{B_1 \oplus B_2\in\mathcal C_{2,\mathrm{in}}\\A_1 \oplus A_2 \subseteq B_1 \oplus B_2}} 
k^{\frac{\#(\gamma (\alpha_1 \oplus \alpha_2))}{2} +|A_1| +|A_2|-2r}\cdot t^{|B_1|+|B_2|}\cdot g_{B_1}(\rho_n) \cdot g_{B_2}(\rho_n) \cdot (-1)^{|B_1|+|B_2|-|A_1| -|A_2|}\\
&=\trace  \left[ \prod_{i=1}^2 \left( \sum_{\substack{B_i\in\mathcal C_1\\A_i \subseteq B_i}} 
T_{A_i}^{(k)} \cdot k^{|A_i|-r}\cdot t^{|B_i|}\cdot g_{B_i}(\rho_n) \cdot (-1)^{|B_i|-|A_i|} \right) \right]
= \trace \left[ (M(\rho_n) )^2 \right] + o(1),
}
where $\alpha_i$ are defined by $A_i$, respectively.
Then, Chebyshev's inequality shows for each $\varepsilon >0$
\begin{align*}
\mathbb{P} \left( \left\| Z(\rho_n)- \mathbb E Z(\rho_n)  \right\|_2^2 \geq \varepsilon^2 \right)
&\leq \frac{1}{\varepsilon^2} \mathbb E  \left\| Z(\rho_n) - \mathbb E Z(\rho_n)  \right\|_2^2 \\
&= \frac{[\mathbb E \operatorname{Tr}Z(\rho_n)]^2 - \operatorname{Tr}[M(\rho_n)^2] + o(1)}{\varepsilon^2} =  \frac{o(1)}{\varepsilon^2}
\end{align*}
This completes our proof of the convergence in probability.
\end{proof}

\begin{remark}
For some models of random unitary channels, it is possible to show that similar convergence results hold \emph{almost surely}, a stronger convergence that the convergence in probability proven here. This is enabled by better controlling the error in equations such as \eqref{eq:E-Z-psi}, up to $O(n^{-2})$ terms. This is one technical difference between random unitary and random orthogonal  matrices: in the former case, the error in the approximation of the Weingarten formula \eqref{eq:Wg-asympt} is $O(n^{-2})$, while in the latter it is $O(n^{-1})$, see \cite{collins2006integration}.

\end{remark}

\section{Optimal sequences of input states}\label{sec:optimal-inputs}

Having computed in the previous section the asymptotic behavior of the outputs for a fixed sequence of input state, we turn now to the problem of finding the input sequences giving the outputs with least entropy (asymptotically). Our strategy is to show that for any sequence of input states, the outputs will lie, asymptotically, inside a fixed, deterministic set $K_{r,k,t}$. We shall then minimize the entropy for states inside this convex set $K_{r,k,t}$. 

We start by writing the expected value of an output state into a more compact form. 
In what follows we replace $\mathcal C_1$ by $\hat{\mathcal P}_2(r)$ the set of partial parings on $[r]$
because in this section the parameter $r$ is more relevant. 
Starting from $M(\rho_n)$ in \eqref{eq:M-matrix}, we have
\begin{align}
\nonumber M(\rho_n) &= \sum_{A \subseteq B \in \hat{\mathcal P}_2(r)} T_A^{(k)} t^{|B|} k^{-r+|A|} g_B(\rho_n) (-1)^{|B|-|A|}\\
\nonumber &=\sum_{ B \in \hat{\mathcal P}_2(r)} \langle  \tilde T_B^{(d_n)} , \rho_n \rangle \sum_{A \subseteq  B}  t^{|B|} k^{-r+|A|} (-1)^{|B|-|A|} T_A^{(k)}\\
\label{eq:Z-psi-T-R}&= \sum_{ B \in \hat{\mathcal P}_2(r)} \langle \tilde T_B^{(d_n)}, \rho_n \rangle \tilde R_B^{(k)},
\end{align}
where the operators $ \tilde T_B^{(d_n)} \in \mathcal M_{d_n^r}(\mathbb C)$ and  $\tilde R_B^{(k)} \in \mathcal M_{k^r}(\mathbb C)$  for $A,B \in \hat{\mathcal P}_2(r)$ are defined as follows:
\begin{align*} 
\tilde T_B^{(d_n)} &:= d_n^{-|B|} T_B^{(d_n)} \quad \left(=  \left[ \bigotimes_{\{i,j\} \in B} d_n^{-1}\omega_{ij} \right] \otimes \left[ \bigotimes_{s \notin B}  I_s \right] \right)\\
\tilde R_B^{(k)} &:= \left[ \bigotimes_{\{i,j\} \in B} t\left(k^{-1}\omega_{ij}  - k^{-2}I_{ij}\right)\right] \otimes \left[ \bigotimes_{s \notin B} k^{-1} I_s \right]\\
&= \sum_{A \subseteq B}  t^{|B|} k^{-r+|A|} (-1)^{|B|-|A|} T_A^{(k)}
\end{align*} 
where one can see the last equality via binomial formula.

Note that equation \eqref{eq:Z-psi-T-R} is close to what we want: to express the output of the channel as a convex combination of simple quantum states. The problem here is that, although the scalars $\langle \tilde T_B^{(n)} , \rho_n \rangle$ are non-negative, the matrices $\tilde R_B^{(k)}$ are not, in general, positive semidefinite. In fact, we have $\operatorname{Tr} \tilde R_B^{(k)} = \delta_{B, \emptyset}$. In order to achieve our goal, we shall apply the M\"obius inversion formula \cite{rota1964foundations} to \eqref{eq:Z-psi-T-R}. First, it is quite obvious to see that the M\"obius function on the lattice $\hat{\mathcal P}_2(r)$ is identical to the one for the lattice of subsets: if a partial pairing $A$ is contained in another partial pairing $B$, then $\mu(A, B) = (-1)^{|B|-|A|}$. Hence, if we define
\begin{align}
\label{eq:S-R}\tilde S^{(k)}_B &:= \sum_{A \subseteq B} \tilde R_A^{(k)}\\ 
\nonumber \tilde Q_A^{(d_n)} &:= \sum_{B \supseteq A}  (-1)^{|B|-|A|}\tilde T_B^{(d_n)},
\end{align}
we have, via the M\"obius inversion formula
$$\tilde R^{(k)}_B = \sum_{A \subseteq B} (-1)^{|B|-|A|}\tilde S_A^{(k)},$$
and we can rewrite \eqref{eq:Z-psi-T-R} as 
\begin{align}
\nonumber M(\rho_n)&=  \sum_{ B \in \hat{\mathcal P}_2(r)} \langle \tilde T_B^{(d_n)} , \rho_n \rangle \tilde R_B^{(k)}\\
\nonumber &=  \sum_{ A \subseteq  B \in \hat{\mathcal P}_2(r)} \langle  \tilde T_B^{(d_n)} , \rho_n \rangle (-1)^{|B|-|A|}\tilde S_A^{(k)}\\
\nonumber &=  \sum_{ A  \in \hat{\mathcal P}_2(r)} \left\langle \sum_{B \supseteq  A}(-1)^{|B|-|A|} \tilde T_B^{(d_n)} , \rho_n \right\rangle \tilde S_A^{(k)}\\
\label{eq:Z-psi-Q-S}&=  \sum_{ A  \in \hat{\mathcal P}_2(r)} \langle  \tilde Q_A^{(d_n)} , \rho_n \rangle \tilde S_A^{(k)}.
\end{align}
From \eqref{eq:S-R}, we can actually obtain an explicit formula for the matrices $\tilde S_B^{(k)}$:
\begin{align}
\nonumber\tilde S^{(k)}_B &:= \sum_{A \subseteq B} \tilde R_A^{(k)}\\ 
\nonumber&= \sum_{A \leq B} \left[ \bigotimes_{\{i,j\} \in A} t\left(k^{-1}\omega_{ij}  - k^{-2}I_{ij}\right)\right] \otimes \left[ \bigotimes_{s \notin A} k^{-1} I_s \right] \\
\nonumber&=  \left[ \bigotimes_{\{i,j\} \in B} t\left(k^{-1}\omega_{ij}  - k^{-2}I_{ij}\right) + k^{-2}I_{ij}\right] \otimes \left[ \bigotimes_{s \notin B} k^{-1} I_s \right] \\
\nonumber&=  \left[ \bigotimes_{\{i,j\} \in B} tk^{-1}\omega_{ij}+(1-t) k^{-2}I_{ij}\right] \otimes \left[ \bigotimes_{s \notin B} k^{-1} I_s \right] \\
\label{eq:S-explicit}&=  \left[ \bigotimes_{\{i,j\} \in B}\eta_{ij}\right] \otimes \left[ \bigotimes_{s \notin B} k^{-1} I_s \right],
\end{align}
where 
\begin{equation}\label{eq:def-eta}
\eta := tk^{-1}\omega+(1-t) k^{-2}I \in \mathcal M_{k^2}(\mathbb C)
\end{equation}
is indeed a quantum state (i.e.~a positive semidefinite matrix of unit trace); such states, convex mixtures between a maximally entangled state and a maximally mixed state are called \emph{isotropic states} in the quantum information theory literature. 

We have now all the ingredients to state the main result of this section. 

\begin{theorem}\label{thm:optimal-input-seq}
Consider a sequence of random quantum channels $\Phi_n : \mathcal M_{d_n}(\mathbb C) \to \mathcal M_k(\mathbb C)$ constructed from random Haar distributed orthogonal matrices $U_n \in \mathcal O(kn)$, as in Section \ref{sec:outputs}. Furthermore, assume that $d_n \sim tkn$ for some constant $t \in (0,1)$ and define, for any $r \geq 1$, the convex set
$$K_{r,k,t}:= \operatorname{conv} \left\{ \tilde S^{(k)}_B \, : \, B \in \hat{\mathcal P}_2(r) \right \}\subseteq \mathcal M_{k^r}^{1,+}(\mathbb C).$$
Then, for any \emph{fixed} sequence of input states $\rho_n \in  \mathcal M_{d_n}^{1,+}(\mathbb C)$, the output states converge, in probability, to the convex body $K_{r,k,t}$: for all $\varepsilon >0$,
$$\lim_{n \to \infty} \mathbb P\left[ \operatorname{dist}(\Phi_n^{\otimes r}(\rho_n), K_{r,k,t}) > \varepsilon \right] = 0.$$
Note that $K_{r,k,t}$ depends on $t$ via \eqref{eq:def-eta}.
\end{theorem}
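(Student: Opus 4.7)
The plan is to reduce the statement to an asymptotic analysis of the deterministic matrix $M(\rho_n)$ together with its decomposition in \eqref{eq:Z-psi-Q-S}. By Theorem~\ref{thm:convergence-in-prob}, $\|Z(\rho_n) - \mathbb E Z(\rho_n)\|_2 \to 0$ in probability, and by Theorem~\ref{theorem:average-output}, $\mathbb E Z(\rho_n) = (1+o(1)) M(\rho_n)$. It therefore suffices to exhibit, uniformly in $(\rho_n)$, a sequence $\sigma_n \in K_{r,k,t}$ with $\|M(\rho_n) - \sigma_n\|_2 \to 0$; a triangle inequality then yields $\operatorname{dist}(\Phi_n^{\otimes r}(\rho_n), K_{r,k,t}) \to 0$ in probability.

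Starting from \eqref{eq:Z-psi-Q-S}, I would write $M(\rho_n) = \sum_{A \in \hat{\mathcal P}_2(r)} \lambda_A^{(n)}\, \tilde S_A^{(k)}$ with real scalars $\lambda_A^{(n)} := \langle \tilde Q_A^{(d_n)}, \rho_n\rangle$, recalling that each $\tilde S_A^{(k)}$ is already a quantum state by the explicit form \eqref{eq:S-explicit}. The first easy step is to verify $\sum_A \lambda_A^{(n)} = 1$: interchanging sums gives $\sum_B \langle \tilde T_B^{(d_n)}, \rho_n\rangle \sum_{A \subseteq B} (-1)^{|B|-|A|}$, the inner sum equals $(1-1)^{|B|}$ and vanishes unless $B = \emptyset$, leaving $\langle \tilde T_\emptyset^{(d_n)}, \rho_n\rangle = \operatorname{Tr}(\rho_n) = 1$. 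One would hope to conclude by showing $\lambda_A^{(n)} \geq 0$, but this already fails for $r \geq 3$: a brief check with $\rho_n$ equal to a maximally entangled state on a single pair of input factors shows that certain coefficients are of order $-d_n^{-2}$.

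The technical heart of the argument is therefore the asymptotic statement $\lambda_A^{(n)} \geq -o(1)$, uniformly in $\rho_n$. Setting $S := \mathrm{supp}(A)$ and $m := r - |S|$, one factorizes $\tilde Q_A^{(d_n)} = \bigl(\bigotimes_{\{i,j\} \in A} p_{ij}\bigr) \otimes Q_m^{(d_n)}$, where $p_{ij} := d_n^{-1}\omega_{ij}$ is the rank-one projector onto the Bell state on the pair $\{i,j\}$ and $Q_m^{(d_n)} := \sum_{C \in \hat{\mathcal P}_2([m])} (-1)^{|C|} \bigotimes_{\{i,j\}\in C} p_{ij} \otimes \bigotimes_{s\notin C} I_s$ acts on the $m$ ``free'' factors. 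Since the prefactor is positive semidefinite of operator norm one, it is enough to prove $Q_m^{(d_n)} \geq -o(1)\, I$. This is clear for $m \leq 2$, and for $m \geq 3$ the plan is to exploit the near-orthogonality of the family $\{p_{ij}\}_{\{i,j\} \subset [m]}$: a bases-level calculation gives $\|p_{ij} p_{k\ell}\|_\infty = O(d_n^{-1})$ for distinct pairs, so the Gram matrix of the natural orthonormal bases of the $\mathrm{range}(p_{ij})$'s is $I + O(d_n^{-1})$. A spectral analysis of $\sum_{\{i,j\}} p_{ij}$ via this Gram matrix (and an induction on $m$ to handle the higher-order terms) should yield $\|Q_m^{(d_n), -}\|_\infty = O(d_n^{-1})$. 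I expect this to be the main obstacle: the natural recursion $Q_m = Q_{m-1} \otimes I_m - \sum_{i<m} p_{i,m} \otimes Q_{m-2}^{([m-1]\setminus\{i\})}$ loses exactly the cancellation one needs if treated with a bare triangle inequality, so the geometry of the ranges of the $p_{ij}$ has to be tracked with some care.

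Given the bound, set $\lambda_A^{\pm} := \max(\pm \lambda_A^{(n)}, 0)$ and $N_n := \sum_A \lambda_A^{-} = o(1)$; since $\sum_A \lambda_A^{+} = 1 + N_n$, the quantum state $\sigma_n := (1+N_n)^{-1} \sum_A \lambda_A^{+}\, \tilde S_A^{(k)}$ is an honest convex combination and hence lies in $K_{r,k,t}$. Writing $\sigma_n^{-} := N_n^{-1} \sum_A \lambda_A^{-}\, \tilde S_A^{(k)} \in K_{r,k,t}$ when $N_n > 0$, one has $M(\rho_n) = (1+N_n)\sigma_n - N_n \sigma_n^{-}$, so $M(\rho_n) - \sigma_n = N_n(\sigma_n - \sigma_n^{-})$ and $\|M(\rho_n) - \sigma_n\|_2 \leq 2 N_n = o(1)$ (using that quantum states have Frobenius norm at most $1$). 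Together with Theorems~\ref{theorem:average-output} and~\ref{thm:convergence-in-prob}, this gives the stated convergence in probability.
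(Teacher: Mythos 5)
Your overall route is the same as the paper's: reduce to the deterministic matrix $M(\rho_n)$ via Theorems \ref{theorem:average-output} and \ref{thm:convergence-in-prob}, use the decomposition \eqref{eq:Z-psi-Q-S}, check that the coefficients $\lambda_A^{(n)} = \langle \tilde Q_A^{(d_n)},\rho_n\rangle$ sum to one, and then argue that they are asymptotically non-negative. Your observation that exact non-negativity fails for $r\geq 3$ (with coefficients as negative as $-O(d_n^{-2})$) is correct, and your final bookkeeping with the positive and negative parts $\lambda_A^{\pm}$ is a clean, explicit way to turn ``asymptotically non-negative weights summing to one'' into $\operatorname{dist}(M(\rho_n),K_{r,k,t})=o(1)$; the paper is terser on that conversion.

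The one genuine gap is exactly where you flag it: the operator inequality $\tilde Q_A^{(d_n)}\geq -o(1)\,I$ (equivalently, after your correct factorization, $Q_m^{(d_n)}\geq -o(1)\,I$) is asserted but not proved. This is the crux of the whole theorem, and your Gram-matrix/near-orthogonality plan for it is only a sketch: the pairwise bound $\|P_{ij}P_{kl}\|_\infty=O(d_n^{-1})$ is right, but the ``induction on $m$ to handle the higher-order terms'' is precisely where the work lies, since the number of terms in $Q_m^{(d_n)}$ is fixed but their alternating signs mean a bare norm estimate on the recursion does not close, as you yourself note. For what it is worth, the paper does not prove this inequality from scratch either: it invokes \cite[Corollary 3.6]{fukuda2014asymptotically}, which shows that the spectrum of $\tilde Q_A^{(d_n)}$ lies within $O(1/n)$ of $\{0,1\}$ (the operators there are indexed by partial permutations rather than partial pairings, but they are the same operators, hence have the same spectrum). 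So the fact you need is true and available off the shelf; to make your write-up complete you should either carry out the spectral estimate in full or cite that corollary, and with that citation your argument becomes a valid (and in places more detailed) version of the paper's proof.
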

\begin{proof}
Let us fix a sequence of input states $(\rho_n)$ and use the triangle inequality:
$$\operatorname{dist}(\Phi_n^{\otimes r}(\rho_n), K_{r,k,t}) \leq \operatorname{dist}(\mathbb E\Phi_n^{\otimes r}(\rho_n), K_{r,k,t})  + \|\Phi_n^{\otimes r}(\rho_n) -\mathbb E \Phi_n^{\otimes r}(\rho_n)\|_2.$$
We have shown in Theorem \ref{thm:convergence-in-prob} that the second term in the right hand side of the above inequality converges in probability towards zero; it is enough thus to show that the first term also vanishes as $n \to \infty$. From \eqref{eq:Z-psi-Q-S}, we have the following decomposition
$$\mathbb E\Phi_n^{\otimes r}(\rho_n) = (1+o(1))\sum_{ A  \in \hat{\mathcal P}_2(r)} \langle \tilde Q_A^{(d_n)} , \rho_n \rangle \tilde S_A^{(k)}.$$
To finish the proof, we show next that the weights in the equation above are (asymptotically) non-negative and sum up to one. For the claim about the sum, note that
$$\sum_{ A  \in \hat{\mathcal P}_2(r)} \tilde Q_A^{(d_n)}=\sum_{ A \subseteq B  \in \hat{\mathcal P}_2(r)} (-1)^{|B|-|A|} \tilde R_A^{(d_n)}  = \tilde T_\emptyset^{(d_n)} = I_{k^r},$$
proving the claim. The other claim follows from \cite[Corollary 3.6]{fukuda2014asymptotically}, where it was shown that the spectrum of the matrices $\tilde Q_A^{(d_n)}$ is at distance $O(1/n)$ from the set $\{0,1\}$. The reader should make note of the fact that although the matrices $\tilde Q^{(d_n)}_\cdot$ are indexed by different combinatorial objects (partial pairings here and partial permutations in \cite{fukuda2014asymptotically}), they encode the same linear operators and thus they have the same spectrum. 
\end{proof}

\begin{corollary}\label{cor:min-entropy}
Let $B_0$ be a maximal partial pairing in $\hat{\mathcal P}	_2(r)$, i.e.~a pairing consisting of $\lfloor r/2 \rfloor$ pairs and, when $r$ is odd, a singleton. Then, for any \emph{fixed} sequence of input states $\rho_n \in  \mathcal M_{d_n}^{1,+}(\mathbb C)$, the inputs 
$$G^{(d_n)}_{B_0}:= \left[ \bigotimes_{\{i,j\} \in B_0} d_n^{-1}\omega_{ij} \right] \otimes \left[ \bigotimes_{s \notin B_0} d_n^{-1} I_s \right] = d_n^{2\lfloor r/2 \rfloor-r} \tilde T^{(d_n)}_{B_0} $$
 give output states having less entropy than the sequence of inputs $\rho_n$: for all $\varepsilon >0$,
$$\lim_{n \to \infty} \mathbb P\left[ H\left( \Phi_n^{\otimes r}(\rho_n)\right) <  H\left(\Phi_n^{\otimes r}(G^{(d_n)}_{B_0})\right)  - \varepsilon \right] = 0.$$
In other words, the sequence of input states consisting of a tensor product of $\lfloor r/2 \rfloor$ maximally entangled states and, when $r$ is odd, a maximally mixed state yields the output sequence with least asymptotical entropy.  
\end{corollary}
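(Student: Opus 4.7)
The plan is to combine Theorem~\ref{thm:optimal-input-seq} with the concavity of the von Neumann entropy to reduce the minimization over all input sequences to a finite comparison among the extreme points $\tilde S_B^{(k)}$ of $K_{r,k,t}$, to identify which $B$ is optimal, and finally to verify that $(G_{B_0}^{(d_n)})$ actually realizes the optimum.

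For the lower bound, fix an arbitrary sequence $(\rho_n)$. By Theorem~\ref{thm:optimal-input-seq} the distance $\operatorname{dist}(\Phi_n^{\otimes r}(\rho_n),K_{r,k,t})$ tends to $0$ in probability. Any $\sigma \in K_{r,k,t}$ can be written as $\sigma = \sum_B \lambda_B \tilde S_B^{(k)}$, and concavity of $H$ gives $H(\sigma) \geq \min_B H(\tilde S_B^{(k)})$; together with the uniform continuity of $H$ on the compact set of density matrices, this yields with probability tending to one that
\[H(\Phi_n^{\otimes r}(\rho_n)) \geq \min_{B \in \hat{\mathcal P}_2(r)} H(\tilde S_B^{(k)}) - \varepsilon/2.\]
The explicit product formula~\eqref{eq:S-explicit} and additivity of entropy over tensor factors give
\[H(\tilde S_B^{(k)}) = |B|\,H(\eta) + (r-2|B|)\log k = r\log k - |B|\bigl(2\log k - H(\eta)\bigr).\]
For $t \in (0,1)$ the isotropic state $\eta = tk^{-1}\omega + (1-t)k^{-2}I$ is not maximally mixed on $\mathbb C^k \otimes \mathbb C^k$, so $H(\eta) < 2\log k$; hence the right-hand side is strictly decreasing in $|B|$, and the minimum is achieved exactly at maximal pairings, with common value $H(\tilde S_{B_0}^{(k)})$.

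For the matching upper bound, I will show that $\mathbb E\,\Phi_n^{\otimes r}(G_{B_0}^{(d_n)}) \to \tilde S_{B_0}^{(k)}$. Inserting $\rho_n = G_{B_0}^{(d_n)}$ into the decomposition~\eqref{eq:Z-psi-Q-S}, this reduces to showing that the weights $\langle \tilde Q_A^{(d_n)},G_{B_0}^{(d_n)}\rangle$ concentrate on $A = B_0$. A direct tensor-contraction calculation gives
\[\langle \tilde T_B^{(d_n)},G_{B_0}^{(d_n)}\rangle = \begin{cases} 1 & \text{if } B \subseteq B_0, \\ O(d_n^{-1}) & \text{otherwise}, \end{cases}\]
the suppression arising because each pair of $B$ not contained in $B_0$ contributes a trace of $\omega$ against identity factors coming from $G_{B_0}$. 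Substituting this into $\tilde Q_A^{(d_n)} = \sum_{B \supseteq A}(-1)^{|B|-|A|}\tilde T_B^{(d_n)}$ yields
\[\langle \tilde Q_A^{(d_n)},G_{B_0}^{(d_n)}\rangle = \sum_{A \subseteq B \subseteq B_0}(-1)^{|B|-|A|} + o(1) = \mathbf 1\{A=B_0\} + o(1),\]
by the elementary identity $(1-1)^{|B_0|-|A|} = \delta_{A,B_0}$. Theorem~\ref{thm:convergence-in-prob} then upgrades convergence in expectation to convergence in probability, and continuity of $H$ gives $H(\Phi_n^{\otimes r}(G_{B_0}^{(d_n)})) \to H(\tilde S_{B_0}^{(k)})$ in probability. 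Combining with the lower bound completes the argument.

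The main technical obstacle is the contraction lemma $\langle \tilde T_B^{(d_n)},G_{B_0}^{(d_n)}\rangle \to \mathbf 1\{B \subseteq B_0\}$. One has to analyze the multigraph formed on $[r]$ by the pairs in $B$ and $B_0$, whose connected components are either paths or even cycles (since $B$ and $B_0$ are matchings); on each component the trace factorises into elementary contractions of Bell states with identities, from which the exact power of $d_n$ can be read off. Every pair of $B$ lying outside $B_0$ contributes at least one extra factor of $d_n^{-1}$, providing the required suppression.
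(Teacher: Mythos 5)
Your proposal is correct and follows essentially the same route as the paper's proof: a lower bound from Theorem~\ref{thm:optimal-input-seq} together with concavity of the entropy over the convex body $K_{r,k,t}$, the computation $H(\tilde S_B^{(k)}) = |B|H(\eta) + (r-2|B|)\log k$ with $H(\eta)<2\log k$ showing maximal pairings are optimal, and the verification that $\mathbb E\,\Phi_n^{\otimes r}(G_{B_0}^{(d_n)}) \to \tilde S_{B_0}^{(k)}$ via the contraction $\langle \tilde T_B^{(d_n)}, G_{B_0}^{(d_n)}\rangle \to \mathbf 1_{\{B \subseteq B_0\}}$ and the alternating-sum cancellation. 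The only difference is cosmetic: you sketch the multigraph argument for that contraction where the paper invokes it ``by direct inspection'' with a citation to \cite{fukuda2014asymptotically}.
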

\begin{proof}
By the theorem, the outputs belong, when $n$ is large, to the set $K_{r,k,t}$. The extremal points of $K_{r,k,t}$ are precisely the quantum states $\tilde S^{(k)}_B$, with $B$ a partial pairing of $[r]$. Such an extremal state has von Neumann entropy
$$H(\tilde S^{(k)}_B) = |B| H(\eta) + (r-2|B|) \log k,$$
where $\eta$ is the bipartite quantum state define in \eqref{eq:def-eta}; it has entropy strictly less than $2\log k$, more precisely
$$H(\eta) = h(tk^{-1} + (1-t)k^{-2}) + (k^2-1)h((1-t)k^{-2}) ,$$
where $h(x) = -x \log x$.
To finish the proof, we show that the input sequence $G^{(d_n)}_{B_0}$ produces the output sequence $\tilde S^{(k)}_{B_0}$. Indeed, from \eqref{eq:Z-psi-Q-S}, we have
\begin{align*}
\mathbb E \Phi_n^{\otimes r}(G^{(d_n)}_{B_0}) &=  (1+o(1))\sum_{ A  \in \hat{\mathcal P}_2(r)} \langle  \tilde Q_A^{(d_n)} , G^{(d_n)}_{B_0} \rangle \tilde S_A^{(k)} \\
&=  (1+o(1))\sum_{ A \subseteq B  \in \hat{\mathcal P}_2(r)}(-1)^{|B|-|A|}d_n^{2\lfloor r/2 \rfloor-r} \langle  \tilde T_B^{(d_n)} , \tilde T^{(d_n)}_{B_0} \rangle \tilde S_A^{(k)}.
\end{align*}
By direct inspection, and using the fact that $B_0$ is a maximal partial pair pairing, we have that (see also \cite[Section 3]{fukuda2014asymptotically})
$$d_n^{2\lfloor r/2 \rfloor-r} \langle  \tilde T_B^{(d_n)} , \tilde T^{(d_n)}_{B_0} \rangle = (1+o(1)) \mathbf{1}_{B \subseteq B_0},$$
and thus
$$\mathbb E \Phi_n^{\otimes r}(G^{(d_n)}_{B_0}) =  (1+o(1))\sum_{ A \subseteq B \subseteq B_0 \in \hat{\mathcal P}_2(r)}(-1)^{|B|-|A|} \tilde S_A^{(k)} = \tilde S_{B_0}^{(k)},$$
finishing the proof.
\end{proof}

\section{Discussion}\label{sec:discussion}

In this work, using Weingarten calculus on the orthogonal group, we have shown that among fixed input sequences for a tensor power of a random orthogonal quantum channel, product of maximally entangled states achieve the smallest output entropy. 
We consider our results to be evidence toward the claim that such random channels do not violate (asymptotically, with high probability) the additivity relation. 
More precisely, for $r \geq 1$ we conjecture that, almost surely for random orthogonal quantum channels such as the ones in Section \ref{sec:outputs} 
\eq{
\lim_{n \to \infty} S_{\min}(\Phi_n^{\otimes 2r}) \stackrel{?}{=} r \lim_{n \to \infty} S_{\min}(\Phi_n^{\otimes 2}).
}

For this conjecture we must refer to a sentence in \cite{hastings2009superadditivity}: 
``This two-letter additivity conjecture would enable us to restrict our attention to considering input states with a bipartite entanglement structure, 
possibly opening the way to computing the capacity for arbitrary channels''. 
Hastings conjectures thus the following additivity for quantum channels:
\eq{
  S_{\min}((\Psi \otimes \bar \Psi)^{\otimes r})  \stackrel{?}{=}  r  S_{\min}(\Psi \otimes \bar \Psi)
}
In \cite{fukuda2014asymptotically}, we have studied this question in the frame work of the current work, but with random unitary quantum channels.
Then, we have shown that among a very large class of fixed input sequences, tensor products of maximally entangled states yield the outputs with least entropy. 
This is a strong supporting mathematical evidence towards Hastings' conjecture. In the same direction, see \cite{montanaro2013weak,fukuda2015additivity} for considerations about upper bounds on the amount of additivity violations for random quantum channels. 

Surprisingly, if we compare our calculations with ones for unitary random quantum channels from \cite{collins2012towards},
we are inclined to conjecture that generically entanglement does not help to improve minimum output entropy of tensor powers of random unitary quantum channels,
while (only) bipartite entanglement helps for random orthogonal channels: almost surely,
\eq{
\lim_{n \to \infty}  \lim_{r \to \infty}\frac 1rS_{\min}(\Psi_n^{\otimes r}) \stackrel{?}{=} \lim_{n \to \infty} S_{\min}(\Psi_n)  
\quad \text{and}\quad
 \lim_{n \to \infty}\lim_{r \to \infty}\frac 1r S_{\min}(\Phi_n^{\otimes r})\stackrel{?}{=}\frac 12 \lim_{n \to \infty} S_{\min}(\Phi_n^{\otimes 2})  
}
where $\Psi_n$ and $\Phi_n$ are sequences of respectively unitary and orthogonal random quantum channels.

We also conjecture that similar phenomena might occur for the Holevo capacity too, and we hope that such results might shed light on capacity formulas. 
Indeed, according to \cite{collins2015convergence}, certain random quantum channels satisfy a simple linear relation between their Holevo capacity and their minimum output entropy,
while such a linear relation was initially observed in \cite{holevo2005a} for covariant channels.

\bibliography{ref}{}
\bibliographystyle{alpha}

\end{document}